%% file: main.tex
\input{Tex/head}

\begin{document}

\title{
\sysname: Efficient and Effective Attack String Generation for Regular Expression Denial of Service
Vulnerabilities
}

\author{
Shangzhi Xu$^{\dagger\S}$,~Ziqi Ding$^\dagger$,~Xiao Cheng$^{\ddagger*}$,~Yuekang Li$^\dagger$,~Nan Sun$^\dagger$,~Benjamin Turnbull$^\dagger$,\\~Shuangxiang Kan$^\dagger$,~Siqi Ma$^{\P*1}$\\
{\small $^\dagger$The University of New South Wales,~$^\ddagger$Macquarie University,~$^\S$CSIRO,~$^\P$The University of Wollongong}\\
{\small $^*$Corresponding Authors}\\
{\small \{z5500277, ziqi.ding1, yuekang.li, nan.sun, benjamin.turnbull, shuangxiangkan\}@unsw.edu.au}
{\small xiao.cheng@mq.edu.au,~siqim@uow.edu.au}
}


\maketitle
\footnotetext[1]{S. Ma is partially funded by the Australian Research Council (ARC) Future Fellowship (FT - 250100652)}
\begin{abstract}
Regular Expression Denial-of-Service (ReDoS) attacks constitute a critical class of resource-exhaustion vulnerabilities. In such attacks, 
    adversaries exploit the pathological worst-case execution behavior of regular expression (regex) engines to induce highly asymmetric computational workloads, 
    ultimately exhausting system resources and degrading service availability.
To protect systems against ReDoS attacks, 
    numerous detection techniques have been proposed that simulate the attack process by generating attack strings to proactively exploit ReDoS vulnerabilities at the early development stage and facilitate remediation.
Existing techniques broadly fall into two classes: static analyses that search for pathological regex structures, and dynamic exploration methods that synthesize candidate attack strings.
However, 
    the generated attack strings are often impractical for real-world exploitation because they usually assume unrealistic input-length budgets and do not validate the effectiveness and efficiency of the attack at the program level. 
Therefore, 
    many generated strings fail to trigger vulnerable regexes when applied to real-world programs, 
    further limiting the practical utility.

To address these shortcomings,
    we introduce an effective
and efficient attack string generator, \sysname, designed to
synthesize attack inputs that are both feasible within realistic
length budgets and validated at the program level, enabling
effective exploitation of ReDoS vulnerabilities in real-world
programs. Specifically, we first define three vulnerable patterns
based on our observation and formal verification. According
to the patterns, \sysname conducts a synthesis technique
to generate attack strings, and then refines and validates the
strings with ReDoS-specific compositional concolic execution to
guarantee real-world exploitability. We evaluated \sysname
on a baseline corpus of 17,962 real-world vulnerable regexes.
Compared to state-of-the-art tool RENGAR, \sysname produced
attack strings that achieve efficiency gains ranging from
97.2× to 3872.4×, thereby significantly enhancing practical
exploitability assessment in real-world contexts. \sysname
also discovered 59 exploitable ReDoS instances across 12
popular projects, which is 25 more than RENGAR.
\end{abstract}


%
\IEEEpeerreviewmaketitle

\section{Introduction}
\label{sec:introduction}
\input{Tex/intro}

\vspace{-10pt}
\section{Background}
\label{sec:background}
\input{Tex/background}

\section{Formal Basis for Attack String Generation}
\label{sec:study}
\input{Tex/study}

\section{\sysname}
\label{sec:pufferdos}
\input{Tex/pufferdos}

\section{Evaluation}
\label{sec:experiment}
\input{Tex/experiments}

\section{Related Work}
\label{sec:related}
\input{Tex/relatedWork}


\section{Conclusion}
\label{sec:conclusion}
\input{Tex/conclusion}

\section{Ethics considerations}
\noindent
This work does not involve human subjects, and no personal or otherwise sensitive data were collected, created, or processed. 
This work raises no ethical concerns. All testing was conducted in a local offline environment with no interaction with real-world systems or user data. All previously undisclosed vulnerabilities were responsibly disclosed to developers through a coordinated process, including technical details and proof-of-concept inputs, with sufficient time allowed for patching before public discussion.





\bibliographystyle{Format/IEEEtranS}
\bibliography{reference}
%



\appendices
\section{}
\label{sec:appendix}

\input{Tex/appendices}
\end{document}

%% file: Tex/head.tex
\documentclass[conference,compsoc]{IEEEtran}
\usepackage{enumitem}
\usepackage{tabularx}
\usepackage{graphicx}
\usepackage{diagbox}
\usepackage{tikz}
\usetikzlibrary{positioning, matrix, shapes, backgrounds}

\usepackage[ruled,vlined,linesnumbered]{algorithm2e}
\usepackage{mdframed}
\usepackage{xcolor}
\usepackage{tcolorbox}
\usepackage{makecell}
\usepackage{listings}
\usepackage{array}     
\usepackage{amsmath}
\usepackage{caption}   
\usepackage{subcaption}
\usepackage{tikz}  
\usepackage{amsthm}
\usepackage{lipsum}  
\usepackage{booktabs}
\usepackage{multirow}
\usepackage{pifont}
\usepackage{colortbl}
\usepackage{xspace}
\usepackage{pifont}
\usepackage{threeparttable}
\usepackage{adjustbox}
\usepackage{tikz}
\usepackage{pgf}
\usepackage{tcolorbox}
\usepackage{multicol}
\usepackage{hyperref}
\usepackage{enumitem}
\usepackage{tabularx}
\usepackage{graphicx}
\usepackage{mdframed}
\usepackage{xcolor}
\usepackage{tcolorbox}
\usepackage{makecell}
\usepackage{listings}
\usepackage{array}     
\usepackage{afterpage}
\usepackage{float}
\usepackage{caption}   
\usepackage{subcaption}
\usepackage{tikz}  
\usepackage{lipsum}  
\usepackage{booktabs}
\usepackage{multirow}
\usepackage{colortbl}
\usepackage{xspace}
\usepackage{pifont}
\usepackage{threeparttable}
\usepackage{adjustbox}
\usepackage{tikz}
\usepackage{pgf}
\usepackage{tcolorbox}
\usepackage{multicol}
\usepackage{hyperref}
\usepackage{amsthm}
\usepackage{amssymb}
\newtheorem{definition}{Definition}
\newtheorem{pattern}{Pattern}

\newtheorem{theorem}{Theorem}[section] 
\newtheorem{lemma}[theorem]{Lemma}

\newtheorem{assumption}[theorem]{Heuristic} 
\ifCLASSOPTIONcompsoc
  \usepackage[nocompress]{cite}
\else
  \usepackage{cite}
\fi

\ifCLASSINFOpdf
\else
\fi

\newcommand{\sysname}  {\textsc{PufferDoS}\xspace}

\newcommand{\tool}[1]{{\small{\textsf{{#1}}}}}
\newcommand{\func}[1]{{\texttt{{#1}}}}

\definecolor{myblue}{RGB}{70,130,180}
\definecolor{darkred}{RGB}{139, 0, 0}

\definecolor{commentgreen}{RGB}{2,112,10}
\definecolor{eminence}{RGB}{108,48,130}
\definecolor{weborange}{RGB}{255,165,0}
\definecolor{frenchplum}{RGB}{129,20,83}



%% file: Tex/intro.tex
Regular expression Denial of Service (ReDoS) is  an attack that occurs when pathological regular expression (regex) constructs force a backtracking engine to explore a superlinear or exponential number of matching paths, enabling crafted inputs to exhaust CPU resources and disrupt services.
Its severity is amplified by the pervasive use of regexes across diverse domains of computer science,   including programming languages, 
    file and data processing, 
    database querying, and large language model pipelines~\cite{bartoli2016inference,davis2018impact,siddiq2024understanding}.
Recent empirical studies show that
    ReDoS ranks as the fourth most prevalent vulnerability in the JavaScript \tool{npm} ecosystem and the sixth in the Python \tool{PyPI} ecosystem~\cite{hasan2025model}.
According to Snyk Security Research Team's report~\cite{snyk2020state},
    the number of disclosed ReDoS vulnerabilities has been steadily increasing, with a notable 143\% surge observed in 2018.
On the Common Vulnerabilities and Exposures (CVE) database,
    more than 350 ReDoS vulnerabilities have been disclosed since 2016~\cite{cve,bhuiyan2025sok}.
Real-world incidents further demonstrate its impact: in June 2016, a catastrophic-backtracking regex caused a 34-minute outage on Stack Overflow, and in July 2019, a faulty WAF regex triggered a 27-minute global outage at Cloudflare~\cite{stackoverflow2016outage, cloudflare2019outage}.

Numerous ReDoS detection and attack string generation tools have been developed to uncover vulnerable regexes and produce corresponding proof-of-concept attack inputs,      
    enabling developers to recognize and address potential vulnerabilities early in the development cycle.
Existing methods can be broadly categorized into static analysis-guided, dynamic exploration-based, and execution-validation-based (hybrid) approaches.
Static analysis-guided methods~\cite{doyensec_regexploit_2021,kirrage2013static,parolini2023sound,rathnayake2014static} identify ReDoS vulnerabilities by matching expert-defined regex patterns and constructing corresponding attack strings based on predefined templates.
dynamic exploration-based techniques~\cite{huang2025towards,liu2024co3,kim2019target} leverage fuzzing or guided input exploration to discover inputs that cause matching times to exceed a predefined threshold.
Recent approaches integrate execution validation,   
    combining pattern-guided analysis with dynamic verification to enhance accuracy~\cite{davis2018impact,wang2023effective,noller2018badger}.
Moreover,
    for other vulnerability classes beyond ReDoS,
    concolic execution is widely employed to ensure real-world exploitability~\cite{huang2025towards,liu2024co3,kim2019target}.

However,
    despite the prevalence of ReDoS and recent progress in detection or attack generation,
    prior studies have shown that developers often perceive ReDoS vulnerabilities as low-severity issues and tend to ignore them rather than applying fixes.
The main reasons are twofold: (1) the crafted attack strings to exploit ReDoS are typically unrealistically long,
    rendering such attacks impractical under most real-world deployment settings~\cite{bhuiyan2025sok};
    and (2) as noted in recent discussions~\cite{mlsecopsredospodcast},
    many reported attack strings lack practical effectiveness because they ignore data-flow constraints in the program, preventing them from reaching and triggering the vulnerable regex in actual applications.
Although traditional concolic execution can, in principle, solve such constraints and facilitate real-world exploitation,
    its application to ReDoS remains limited due to the intrinsic challenges of symbolically modeling complex regex semantics.
Thus,
    beyond early detection, 
    it is equally important to generate efficient and practical effective attack strings
    to bridge the gap between theoretical detection and exploitation.

To this end, 
    we present \sysname,
    a hybrid ReDoS attack string generation framework that provably constructs inputs inducing substantially greater matching cost than prior approaches while preserving practical exploitability.
Specifically, 
    we first observe that selecting appropriate characters and arranging them in a specific order within an attack string can force excessive backtracking and sharply increase matching cost;
    we then carry out a formal analysis and prove the conditions under which this effect arises and the specific forms of character selection and ordering that induce it.
Guided by the analysis,
    we derive three regex vulnerable patterns, 
    design corresponding attack string generation rules, 
    and implemented them in \sysname to construct candidate attack strings for each vulnerable regex accordingly.
\sysname further refines each candidate via a ReDoS-specific compositional concolic execution to ensure exploitability under realistic constraints.
In implementation, \sysname can be integrated into any ReDoS detector as a plugin attack generator with moderate additional overhead.

In our evaluation,
    we compare \sysname against a state-of-the-art ReDoS attack string generator \tool{RENGAR}~\cite{wang2023effective} on both 17{,}962 vulnerable regexes and 31 exploited ReDoS CVEs. 
The results demonstrate clear performance advantages for \sysname. 
Empowered by rigorous reasoning and compositional concolic refinement,
     \sysname generates attack strings that are on average $97.2\times$ to $3{,}872.4\times$ shorter than those produced by \tool{RENGAR} while achieving the same matching time thresholds of $0.1\mathrm{s}$, $1\mathrm{s}$, and $10\mathrm{s}$.
     \sysname also successfully reproduced 96.8\% of the exploited ReDoS CVEs while \tool{RENGAR} can only exploit 77.4\%.
To further assess \sysname's capability to exploit previously undisclosed vulnerabilities,
    we apply \sysname to 12 open source projects with more than 10k monthly downloads and discovered 59 exploitable ReDoS vulnerabilities across them, demonstrating \sysname’s practical effectiveness in real-world software systems.

In summary, we make the following contributions:

\begin{itemize}[leftmargin=*]
    \item We formally prove an observed effect that selecting specific characters and ordering them in attack strings amplifies regex matching cost; based on this,
    we derive three regex vulnerable pattern and devise provably efficient attack string generation rules.

    \item We develop \sysname, a hybrid ReDoS attack string generator that encodes formal generation rules into a scalable pipeline: it identifies vulnerable patterns, synthesizes compact attack strings, and uses compositional concolic refinement and ReDoS-specific function summaries to ensure real-world exploitability.
    
    \item 
    We evaluate \sysname on 17,962 vulnerable regexes, 31 exploited ReDoS CVEs, and 12 open-source projects. \sysname produces attack strings 97.2×–3,872.4× more efficient than the baseline tool, reproduces 96.8\% of exploited CVEs, and uncovers 59 previously undisclosed vulnerabilities across 12 projects; by submission, developers from six projects have validated 21 of them.
\end{itemize}

%% file: Tex/background.tex
\subsection{Regular Expression Denial of Service (ReDoS)}
A ReDoS attack forces a regex matcher to explore a polynomial- or exponential-scale number of alternative matches, consuming CPU resources and rendering the service unresponsive.
It is triggered by supplying an input that drives the matcher into its worst-case behavior, causing repeated backtracking over all possible match paths.
To execute such an attack, one must analyze the structure of regexes, identify vulnerable patterns, and generate attack strings accordingly.
A formal understanding of regexes is therefore required.

\noindent
\textbf{Regular Expression (Regex).}
In formal language theory~\cite{wang2010modular,hopcroft2006introduction}, 
    a regex is a formalism for describing string patterns and can be converted into an equivalent automaton~\cite{kleene1956representation}; 
    it explores all possible state paths in the automaton to check whether a given string matches a specified pattern.
Let $\Sigma$ denote a finite alphabet, and let $\Sigma^*$ be the set of all finite strings over $\Sigma$.
The symbol $\varepsilon$ represents the empty string and $\emptyset$ the empty set. 
We use $\mathbb{N}$ to denote the set of natural numbers and $\infty$ to represent infinity. 
A regex over $\Sigma$ is generated by the grammar:
\[
r ::= a \mid [C] \mid \varepsilon \mid \emptyset \mid (r_1'r_2') \mid (r_1' \mid r_2') \mid r'^{Q}.
\]

\noindent
where \(a \in \Sigma\), \(C \subseteq \Sigma\) is a character set, juxtaposition denotes concatenation and “\(|\)” denotes alternation.
$r'$ is a \emph{subregex} of $r$, i.e., a syntactic fragment of 
$r$ corresponding to a connected subgraph of its automaton.
The language $\mathcal{L}(r) \subseteq \Sigma^*$ of a regex $r$ 
is defined as the set of all strings over $\Sigma$ that are accepted by $r$.

In practice,
    regexes are usually used to match inputs of varying lengths. 
To achieve this, 
    a regex often contains subregexes  of the form \(r'^{Q}\), 
where \(Q \in (*, ?, +, \{m,\}, \{m,n\})\)  defines how characters matched by \(r'\) repeat within the strings of \(\mathcal{L}(r)\):
“?” denotes optional occurrence, “*”, “+” denote zero or more and one or more repetitions, respectively, and $\{m,\}$ and  $\{m,n\}$ where \(m,n \in \mathbb{N} \cup \{\infty\}\)  denote bounded repetitions.
We define such quantified subregexes as \emph{Loop Expressions}.

\begin{definition}[\textnormal{\textbf{Loop Expression (\func{LE})}}]
\label{def:le}
Let \(r'\) be a subregex of \(r\).
A \emph{Loop Expression} (\func{LE}) refers to a subregex \(r'\) that is annotated with a quantifier symbol or a range such as
\(*, ?, +, \{m,\}, \text{or } \{m,n\}\).
Each \func{LE} is associated with an alphabet \(\Sigma(\func{LE})\),  
    defined as the set of input symbols it can match.
For example, for \(r = a^*b\), the subregex \(r' = a^*\) is a \func{LE} and $\Sigma(\func{LE}) = ['a']$.
\end{definition}

The unfolding function $\mathcal{U}$ takes a regex $r$ as input,
    recursively splits $r$ by ``$|$'' until no further nested alternation remains, 
    and returns a set of subregexes of $r$ obtained through the splits.
Formally, $\mathcal{U}(r)=\{r_1',\dots,r_k'\}$.
For a \func{LE}, 
the unfolding conditionally removes or rewrites its quantifier before splitting its body by “\(|\)”, 
producing \(\mathcal{U}(LE)=\{r_1',\dots,r_l'\}\).

However, 
    as \func{LE}s can match variable length substrings, 
    the regex matching engine must dynamically explore all potential consumption boundaries for each \func{LE} in the input string, 
    forming the core of regex matching.

\noindent
\textbf{Regex Matching.}
Formally, 
    a regex matching function $\mathcal{M}(r, s)$ takes a regex $r$ and a concrete string $s \in \Sigma^*$ as input and returns a Boolean indicating whether $s \in \mathcal{L}(r)$.
Spencer’s~\cite{spencer1994regular} (i.e., backtracking) and Thompson’s~\cite{thompson1968programming} algorithms represent two classical approaches to regex matching. 
The former performs a constrained depth-first search,
    whereas the latter employs a breadth-first traversal.
Among them, 
    Spencer’s algorithm is widely adopted due to its implementation simplicity~\cite{davis2019rethinking}.
    
Specifically, 
    Spencer’s algorithm greedily extends each \func{LE} to consume its longest possible match,
    where we define the termination position as a stop point. 
Upon a matching failure, 
    it backtracks each \func{LE} with an untried stop point to resume matching.

\begin{definition}[\textnormal{\textbf{Stop Point}}]
\label{def:stoppoint}
A \emph{stop point} \(p\) is an index in the input string \(s\)
that marks the end of a valid match for the current $LE$,
formally satisfying 
\(LE_i \text{ matches } s[p_{i-1}:p_i]\) and \(\neg(LE \text{ matches } s[p_{i}\!+\!1])\) when \(s \in \mathcal{L}(r)\).
For example, for \(LE = a^*\) in \(r = a^*b\) and input \(s = aaa..ab\),
the stop point is at index -2.
\end{definition}

Formally, 
    given a regex $r$ and input $s\in\Sigma^*$, 
    Spencer’s algorithm processes $r$ as an ordered sequence of subregexes $(r_1',\dots,r_k')$. 
For each $r_i'$, 
    the matcher scans $s$ left to right,
    consuming the longest prefix matching  $r_i'$. 
If $r_i'$ is a \func{LE}, 
    it greedily locates its last stop point $p_i$ on $s$ and
    then recursively matches $s[p_i\!+1:]$ against $r_{i+1}'\dots r_k'$. 
If the overall match fails for $s$,
    the matcher backtracks the rightmost \func{LE} with an untried stop point \(p_i' < p_i\),
    sets \(p_i'\) as its new stop point,
    and resumes matching from $p_i'$.

\noindent
\textbf{ReDoS.}
While easy to implement,
    Spencer’s algorithm is vulnerable to \emph{catastrophic backtracking}, 
which occurs when matcher performs polynomial or exponential backtracking before concluding $s\notin\mathcal{L}(r)$.  
For example,
    when there are $k$ adjacent \func{LE}s in $r$ and all of them can consume the same substring of $s$ with length $n$, 
    the matcher must try every possible split of these $n$ characters among the $k$ \func{LE}s~\cite{li2022regexscalpel}, 
    yielding $O(n^{k-1})$ backtracking steps (\emph{polynomial backtracking}). 
If the \func{LE}s are nested, 
    each backtracking of an outer loop may trigger a complete re-exploration of its inner loop, 
    doubling the search space at each level 
    and resulting in exponential backtracking blowup, 
    for example $O(2^n)$.

\begin{definition}[\textnormal{\textbf{ReDoS}}]
\label{def:redos}
A \emph{ReDoS} is caused by adjacent $LE$s consume overlapping portions of the same input substring, 
causing the matcher to explore an exponential or polynomial number of backtracking paths 
before rejecting.
For example, 
    for \(r = .^*a.^*b\) and \(s_a = aa...ac\), 
    the matcher tries different splits of ``\(aa..a\)'' between the two $LE$s before failing.
\end{definition}
\noindent
In real-world systems,
    when vulnerable regexes are used to process untrusted inputs, 
    malicious attack strings can cause the system to hang by inducing excessive backtracking, 
    leading to Denial-of-Service attacks, or ReDoS attacks.

\subsection{ReDoS Attack String Generation}
\label{sec:bg_redos}
Given the prevalence of ReDoS vulnerabilities, many studies have proposed methods to detect such vulnerabilities and to generate corresponding attack strings for open-source software.
At a high level, for each vulnerable regex \(r\), an attack string \(s_a\) of length \(|s_a|\) is generated to trigger catastrophic backtracking and exploit the denial-of-service behavior. 
To generate a valid \(s_a\), prior work decomposes \(r\) into \(r = \varphi_1\ \varphi_2\ \varphi_3\), where \(\varphi_1\) is the \emph{prefix} occurring at the start of \(r\), \(\varphi_2\) is the \emph{infix} immediately following \(\varphi_1\) and containing all \func{LE}s that induce backtracking, and \(\varphi_3\) is the \emph{suffix} occurring after \(\varphi_2\).

\begin{definition}[\textnormal{\textbf{Repetition Unit}}]
\label{def:repunit}
A \emph{repetition unit} \(y\) for a attack string is a nonempty string \(y\in\mathcal{L}(\varphi_2)\) such that \(y^q\) is well defined for some \(q\ge 1\) and whose excessive repetition causes the \func{LE}s in \(\varphi_2\) to exhibit exhaustive backtracking.
For example, 
    for \(r = .^*a.^*b\) with \(s_a = aaa..ac\),
    the repetition unit is \(y = a\).
\end{definition}

The attack string is constructed as \(s_a = x\,y^{q}\,z\), such that \(x \in \mathcal{L}(\varphi_1)\), repetition unit \(y\in\mathcal{L}(\varphi_2)\) which will be repeated $q$ times, and \(z \notin \mathcal{L}(\varphi_3)\), thus \(s_a \notin \mathcal{L}(r)\), forcing exhaustive backtracking~\cite{wang2023effective}. 
Specifically, existing attack string generation methods fall into three categories:

\noindent
\textbf{Static Analysis.}
Static analysis based generators such as \tool{regexploit} detect syntactic patterns of vulnerable regexes, for example adjacent or nested \func{LE}s, and apply predefined construction rules to rapidly produce candidate attack strings~\cite{doyensec_regexploit_2021}. 
Automata-based approaches instead translate the regex into an automaton and reason about matching behaviour, offering stronger formal guarantees~\cite{kirrage2013static,parolini2023sound,rathnayake2014static}. 
While these methods are computationally cheap and produce candidates quickly, their precision is limited because they do not capture the full runtime semantics of regex matching.

\noindent
\textbf{Dynamic Exploration.}
Dynamic exploration based methods treat attack generation as a search problem, executing the regex over many mutated inputs and observing execution matching costs to identify inputs that induce high matching cost~\cite{shen2018rescue,barlas2022exploiting,mclaughlin2022regulator,petsios2017slowfuzz}.
However, they suffer from two major limitations: 
(1) they require substantial computational resources to evaluate a large number of candidates, and 
(2) their input space coverage remains limited, which may cause them to miss inputs that trigger higher regex matching costs.

\noindent
\textbf{Synthesis with Executional Validation (Hybrid Methods).}
To balance coverage and computational overhead, recent tools adopt executional validation mechanisms. 
At the regex level, hybrid approaches combine pattern-guided candidate generation with dynamic execution validation~\cite{davis2018impact,wang2023effective,noller2018badger}. 
Candidate attack strings are first generated statically and then executed dynamically to retain only those whose matching time exceeds a predefined threshold. 
Although effective for trigger vulnerable regexes in isolation, these methods often fail to capture the propagation and data-flow constraints that determine whether an input can reach the vulnerable regex in real-world programs.

In traditional vulnerability exploitation,
    at the program level,
    concolic execution enables such real-world effectiveness validation by simultaneously tracking concrete inputs and symbolic expressions, 
    and solving path constraints to explore new execution paths and trigger target behaviors (e.g. vulnerability manifest)~\cite{10.1145/1065010.1065036,10.5555/1792734.1792766,barbosa2022cvc5}. 
While classical DART-style~\cite{10.1145/1065010.1065036,cadar2008klee} concolic executors face exponential path explosion when analyzing programs with multiple functions and branching paths,
SMART-style~\cite{godefroid2007compositional} compositional concolic executors mitigates this by testing functions in isolation, summarizing their behaviors, and reusing these summaries in higher-level analyses~\cite{kim2019target}. 
This strategy reduces redundant exploration and improves scalability; however,
    the inherent difficulty of symbolically modeling regex semantics can still cause the number of summary entries to grow exponentially, reintroducing scalability challenges when exploiting ReDoS in large programs.
Consequently, concolic execution is seldom used for ReDoS attack string generation, despite wide use in other vulnerability classes.

\subsection{Motivation}

Existing works typically generate attack strings by analyzing the regexes in isolation,
    producing inputs that merely exceed the time consumption threshold.
This approach raises two concerns: 
    (i) whether attack strings of similar length that\textbf{ incur higher matching cost} exist, and (ii) whether generated inputs can \textbf{exploit ReDoS in real-world systems}.
In other words, 
   they focus only on the success of attack string generation but ignore both the efficiency and the practical effectiveness.
Previous studies and discussions have shown that, 
    without an efficient and effective attack string, 
    developers will often assign the reported vulnerability low priority ~\cite{bhuiyan2025sok}, mark it as low severity~\cite{CVE-2017-16137,CVE-2017-16098}, 
    or reject it~\cite{CVE-2023-39663} as they consider the issue unlikely to be exploitable in real systems,
    leaving the underlying risk unresolved.
For instance, 
    some existing tools require inputs of 100K to 1M characters to trigger a 10-second slowdown, yet common deployment environments enforce strict input constraints such as the 8 KB HTTP header limit in nginx and Apache Tomcat~\cite{bhuiyan2025sok}, rendering such attacks impractical in practice.
    
\begin{figure}[htbp]
    \centering
    \includegraphics[width=\linewidth]{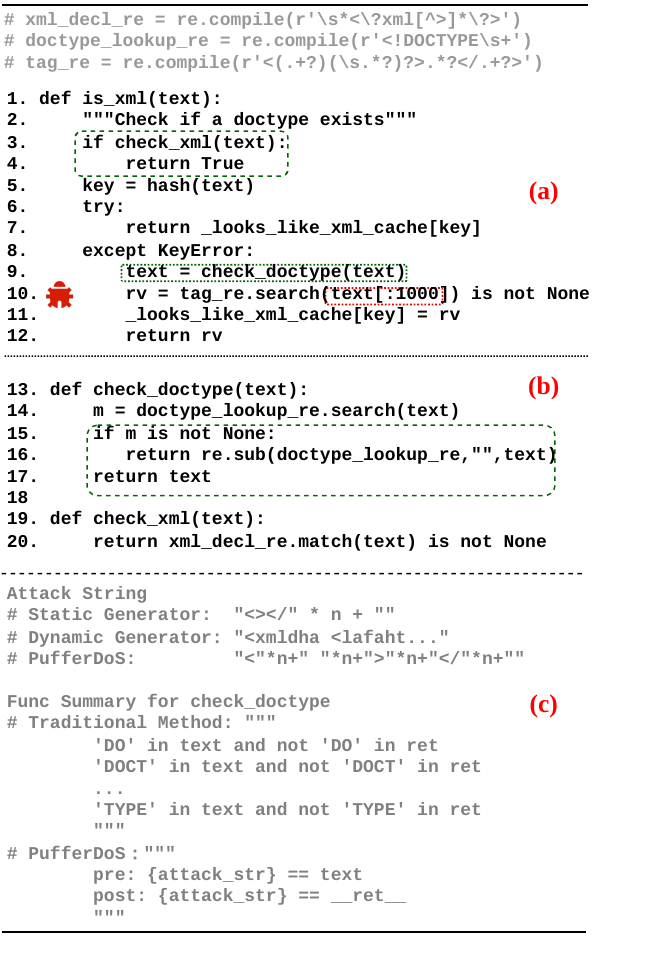}
    \caption{A motivating example.}
    \label{fig:motivation}
\end{figure}

Figure~\ref{fig:motivation} (a) and (b) shows a representative ReDoS vulnerability that we successfully exploited but was missed by existing approaches.
It originates from a syntax-highlighting library with monthly downloads exceeding 300 million.
To preserve anonymity,
    we obfuscated the function name and masked certain code fragments without altering the logical structure of the vulnerability. 
The vulnerable regex \func{tag\_re} is used at Line 10.
Existing tools leverages static analysis-guided attack string generation will decompose \func{tag\_re} into $\varphi_1 = \emptyset,\,
\varphi_2 = \text{\texttt{<(.+?)(\textbackslash s.*?)?>.*?</.+?}}$ and $\varphi_3 = \text{\texttt{>}}$. 
As shown in Figure~\ref{fig:motivation} (c),
    an attack string \(s_a=x\,y^{n}\,z\) is constructed with \(x = \varepsilon\), repetition unit \(y = \texttt{<><\textbackslash}\) satisfying \(y \in \mathcal{L}(\varphi_2)\) and \(z = \varepsilon\) such that \(z \notin \mathcal{L}(\varphi_3)\). 
Hence \(s_a \notin \mathcal{L}(\func{tag\_re})\) and excessive backtracking is triggered.
Dynamic exploration-based generators iteratively mutate and validate candidate strings against \func{tag\_re} until a substantial slowdown in matching is observed.
However, 
    our manual testing revealed that these generators face practical limitations. 
While the static analysis-guided generated attack string \(s_a\) with \(|s_a|=500,000\) induces approximately a 10 seconds hang when the regex is evaluated in isolation,
    the program truncates inputs to 1,000 characters at Line 10,
    reducing the slowdown to only 0.1s in practice. 
Dynamic exploration-based generators encounter similar obstacles:   
    truncation prevents exploits from manifesting, 
    and earlier filters at Lines 3 and Line 9 may block randomly generated inputs before they reach Line 10.
According to prior study~\cite{bhuiyan2025sok,davis2018impact} and developer feedback,
    attacks causing slowdowns below 10 seconds, involving excessively long inputs, 
    or failing to reach the vulnerable regex are deemed ineligible and therefore ignored.

Traditional concolic execution methods also fail to handle this example as it involves complex regexes.
Although SMT solvers like Z3 provide regex support,
    they perform poorly when reasoning about multiple complex regex constraints simultaneously~\cite{stanford2021symbolic}, which exacerbates the path explosion problem in DART.
Compositional methods are likewise inadequate, 
    as summarizing functions that embed regexes requires symbolically modeling all possible regex behaviors, 
    causing substantial overhead when reusing function summaries.
As shown in Figure~\ref{fig:motivation} (c), 
    for function \func{check\_doctype} in Figure~\ref{fig:motivation} (b),
    attempting to capture all regex behaviors results in large and intractable summaries.
    
These limitations motivate us to design a novel attack string generation approach that improves both efficiency and effectiveness. 
For efficiency,
    we seek \(s_a\) that incur higher matching time under a fixed length budget. 
Consider the motivating example,
    given \(\func{tag\_re}\) with many adjacent \( \func{LE} \)s, 
    a stop point \(p_i\) is selected on $s_a$ for \( \func{LE}_i \);
    the subsequent \( \func{LE}_{i+1} \) advances from \(p_i\) to locate its own stop point \(p_{i+1}\).
When matching fails, 
    the engine backtracks by decreasing \(p_{i+1}\) through the stop points in interval \([p_i,\, p_{i+1}]\). 
Therefore, 
    to improve efficiency,
    we seek an \(s_a\) that enlarges the initial gaps between \(p_i\) and \(p_{i+1}\), 
    forcing each backtrack to reprocess longer input segments,
    thereby increasing overall matching cost.
For effectiveness, once a candidate \(s_a\) is obtained, 
    concolic execution is used to refine and validate \(s_a\) so that it reaches the \(r\) in realistic program contexts. 
To scale concolic execution, 
    ReDoS-specific function summaries are synthesized to retain only summary entries that affect propagation of \(s_a\), 
    discarding unrelated behavior.

As shown in Figure~\ref{fig:motivation} (c),
    a stronger $s_a$ can be constructed by repeating each character in repetition unit \(y = \texttt{<><\textbackslash}\) in isolation then concatenating the results to enlarge gaps between stop points,
    and finally adding a suitable prefix and a disruptive suffix.
The function summary for \func{check\_doctype} is refined to the contract:
    if the parameter is the attack string, then the return value must also equals to it. 
The abstraction eliminates in-function regex reasoning but preserves propagation semantics.
We then use concolic execution on the function summary to refine $s_a$ so it bypasses the filters at Lines 3 and 9.

During testing, 
    the refined $s_a$ produced a system hang exceeding 4,000 seconds when supplied as input with \(|s_a|=1,000\),
    demonstrating over 40,000× efficiency gain compared to traditional methods and confirming the vulnerability’s real-world exploitability.

%% file: Tex/study.tex
To formalize the insights from the motivating example, we present a formal analysis that yields a heuristic model of matching cost to guide our subsequent tool design. Note that, \textbf{complete proofs} of all lemmas and the theorem are presented in the Appendix~\ref{app:proof}. 


Figure~\ref{fig:example} presents a vulnerable regex \(r=\texttt{a+.*(cd|e)+f}\) and five corresponding representative attack strings \(s_{a0}, s_{a1}, s_{a2}, s_{a3}, s_{a4}\),
    which are used throughout this section to illustrate each step of our verification.
Among these attack strings, $s_{a0}$ repeats its repetition unit only once while other strings repeat them multiple times.
For the repetition units,
    \(\func{LE}_0\) consistently selects the repetition unit \(y_0 = a\). 
    \(\func{LE}_1\) coincides with \(\func{LE}_0\) except in \(s_{a2}\) where it instead selects \(y_1 = b\).
    \(\func{LE}_2\) chooses \(y_2 = cd\) in \(s_{a1}\) and \(s_{a2}\), 
    while choosing \(y_2 = e\) in the remaining strings.
Regarding structural layout,
    \(s_{a0},s_{a1}, s_{a2}, s_{a3}\) arranges repetition units in an interleaved manner, 
    whereas \(s_{a4}\) concatenates them separately.

\noindent
\begin{assumption}\label{ass:1}[Monotonic Matching Cost in Reachable Stop Points]
Fix a regex $r$, a backtracking matcher, the set of participating \func{LE}s inside $\varphi_2$, and the failing suffix location (i.e., where $\varphi_3$ first rejects). Let $C_r(s_a,y,n,m)$ denote the matching cost for input $s_a$, where $n=|s_a|$ and $m$ is the total number of reachable stop points contributed by the participating \func{LE}s on $s_a$ to the left of the failing location. 
Here $y$ is any repetition unit used only to describe the layout of reachable stop points inside $\varphi_2$.
If two inputs $s_a$ and $s_a'$ keep the participating \func{LE}s and the failing location unchanged and $m' \ge m$, then
\[
C_r(s_a',y',n',m') \;\ge\; C_r(s_a,y,n,m),
\]
with a strict inequality whenever $m' > m$.

\noindent
Note that we assume these $LE$s are unbounded (e.g. with quantifier *,?,+) or possess a high repetition capacity (e.g.,  with quantifier \{m, n\} where $n - m$ is sufficiently large), 
    such that they are capable of consuming arbitrarily long substrings within $\varphi_2$.
\end{assumption}

To demonstrate this heuristic, 
we show how the number of stop points affects the matching cost. 
In Figure~\ref{fig:example}, 
    for \(\func{LE}_0\), there are only \textbf{one} stop point in \(s_{a0}\) and \textbf{two} in \(s_{a1}\). 
The regex matcher first attempts to match using the last stop point of \(\func{LE}_0\); 
when this attempt fails, the matcher backtracks once to try the previous stop point in \(s_{a1}\), 
whereas in \(s_{a0}\) the matching terminates immediately, 
Thus, \(s_{a1}\) incurs a higher matching cost than \(s_{a0}\).

\noindent
\begin{lemma}[Shorter Repetition Units Increase Matching Cost]\label{lem:1}
Fix a regex $r$ and its infix $\varphi_2$ that induces backtracking. 
Let $x\in\mathcal{L}(\varphi_1)$ and $z\notin\mathcal{L}(\varphi_3)$ be fixed so that the resulting attack strings fail at the same suffix location. 
If $y_1,y_2\in\mathcal{L}(\varphi_2)$ are feasible repetition units for all participating \func{LE}s and $|y_1|<|y_2|$, 
    then for any fixed length budget $n$, constructing the infix with $y_1$ yields no lower matching cost than with $y_2$, and the cost is strictly higher whenever the number of reachable stop points increases.
\end{lemma}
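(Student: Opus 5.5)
The plan is to reduce the lemma to Heuristic~\ref{ass:1}: fixing everything except the repetition unit, a shorter unit yields at least as many reachable stop points within the same length budget, and the heuristic then converts this count comparison into a cost comparison. Concretely, fix $x$, $z$, and the budget $n$. For $i\in\{1,2\}$ we build $s_a^{(i)} = x\,y_i^{q_i}\,z$ with $q_i=\lfloor (n-|x|-|z|)/|y_i|\rfloor$, padding if needed with a prefix of $y_i$ that does not move the failing location. Since $z$ and the participating \func{LE}s inside $\varphi_2$ are identical in both inputs, and $z\notin\mathcal{L}(\varphi_3)$ rejects at the same suffix position, the hypotheses of Heuristic~\ref{ass:1} concerning participating \func{LE}s and failing location are met.

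The key step is counting reachable stop points. Because each $y_i\in\mathcal{L}(\varphi_2)$ is feasible for all participating \func{LE}s, every boundary between consecutive copies of $y_i$ is a position at which a participating \func{LE} can legally end its match (Definition~\ref{def:stoppoint}). This holds because the \func{LE} can consume an integral number of copies, and the remaining \func{LE}s can consume the rest, using the unboundedness assumption in Heuristic~\ref{ass:1}. Hence each participating \func{LE} has at least $q_i+1$ reachable stop points in the infix, so $m_i \ge c\,(q_i+1)$ for the fixed number $c$ of participating \func{LE}s. For an upper-bound comparison, I will argue that stop points correspond to unit boundaries, so $m_i$ is governed by $q_i$ up to terms that do not depend on $i$. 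Since $|y_1|<|y_2|$ gives $q_1\ge q_2$, we obtain $m_1\ge m_2$. Applying Heuristic~\ref{ass:1} then gives $C_r(s_a^{(1)},y_1,n,m_1)\ge C_r(s_a^{(2)},y_2,n,m_2)$, with strict inequality exactly when $m_1>m_2$, which is the statement's strictness clause.

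The main obstacle is making the stop-point count monotone in $q_i$ rigorous. Stop points may also occur strictly inside a copy of $y_2$, for example when a longer unit contains sub-boundaries at which an \func{LE} can stop, and this could in principle let $y_2$ contribute more stop points per character. I would handle this by defining $m$ as the number of reachable stop points induced by the unit layout, as the heuristic's phrase ``used only to describe the layout'' permits, and counting only inter-unit boundaries. If intra-unit stop points must be counted, my first fallback is to note that any interior stop point of $y_2$ splits it into a shorter feasible prefix. That prefix could then be taken as a unit at least as short, which reduces to the inter-boundary case. A secondary issue is the floor rounding when the budget is not divisible, and this is absorbed by the padding choice above.
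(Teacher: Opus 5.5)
Your proposal is correct and follows essentially the same route as the paper's proof. The paper likewise treats every boundary between consecutive copies of $y$ as a reachable stop point for each participating \func{LE} and counts these as $m(y)=\lfloor n/|y|\rfloor$ (your $q_i$, up to the $|x|,|z|$ offset and the factor $c$), deduces $m(y_1)\ge m(y_2)$ from $|y_1|<|y_2|$, and then invokes Heuristic~\ref{ass:1} for both the non-strict and the strict cost comparison. Your primary convention of counting only inter-unit boundaries matches the paper's implicit one, so the padding step and the intra-unit fallback are unnecessary; the fallback would in any case replace $y_2$ by a different unit and so no longer compare the given $y_1,y_2$.
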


In Figure~\ref{fig:example}, 
\(s_{a2}\) uses \(y = acd\) to construct the infix, whereas \(s_{a3}\) uses \(y = ae\). 
As a result, 
    for \(\func{LE}_0\),
    there are \textbf{three} stop points in \(s_{a2}\) and \textbf{four} in \(s_{a3}\).
By Heuristic~\ref{ass:1} (matching cost increases with the number of reachable stop points), 
\(s_{a3}\) incurs a higher matching cost under a fixed attack string length.

\noindent
\begin{lemma}[Repetition Units from Intersection Increase Matching Cost]\label{lem:2}
Let $r$ contain adjacent $\func{LE}_i$ and $\func{LE}_{i+1}$ with $\Sigma(\func{LE}_i)\cap\Sigma(\func{LE}_{i+1})\neq\varnothing$. 
Fix $x\in\mathcal{L}(\varphi_1)$ and $z\notin\mathcal{L}(\varphi_3)$. 
If $y_1,y_2\in\mathcal{L}(\varphi_2)$ are feasible repetition units such that every symbol of $y_1$ lies in $\Sigma(\func{LE}_i)\cap\Sigma(\func{LE}_{i+1})$ (both loops can consume $y_1$), while $y_2$ is consumable by exactly one of the two loops, i.e.,
\[
\;y_2\in\Sigma(\func{LE}_i)\setminus\Sigma(\func{LE}_{i+1})\;\;\text{ or }\;\; y_2\in\Sigma(\func{LE}_{i+1})\setminus\Sigma(\func{LE}_i),
\]
then 
    for any fixed length budget $n$, constructing the infix with $y_1$ yields no lower matching cost than with $y_2$, 
    and the cost is strictly higher whenever the number of reachable stop points increases.
\end{lemma}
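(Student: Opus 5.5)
The plan is to reduce Lemma~\ref{lem:2} to Heuristic~\ref{ass:1}. Under the fixed prefix $x$, suffix $z$ and failing location, the matching cost depends monotonically on the number $m$ of reachable stop points contributed by the participating \func{LE}s. So it is enough to show that, for every length budget $n$, the infix built from $y_1$ has at least as many reachable stop points as the infix built from $y_2$, with strictly more in the strict case. I will write $s_a=x\,y_1^{q_1}z$ and $s_a'=x\,y_2^{q_2}z$ with $|s_a|=|s_a'|=n$, padding as needed as in Lemma~\ref{lem:1}. Both inputs keep the same participating \func{LE}s and the same failing location, so the hypotheses of Heuristic~\ref{ass:1} are met.

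The first step is to count stop points for each construction.

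For $y_2$, say $y_2\in\Sigma(\func{LE}_i)\setminus\Sigma(\func{LE}_{i+1})$; the other case is symmetric. The repeated segment can only be consumed by $\func{LE}_i$, so $\func{LE}_{i+1}$ cannot extend into it. The boundary between the two loops is then forced: $\func{LE}_{i+1}$ can only take a position at or after the end of the $y_2$-block. Every candidate split that gives part of the block to $\func{LE}_{i+1}$ fails immediately on the first symbol, so those positions are not reachable stop points for the pair. The pair therefore contributes $m_2 \le c+k_2$, where $k_2$ is the number of stop points of $\func{LE}_i$ inside the block (at most $q_2$, one per unit boundary) and $c$ is a constant independent of the block.

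For $y_1$, every symbol lies in $\Sigma(\func{LE}_i)\cap\Sigma(\func{LE}_{i+1})$, so either loop can consume any prefix or suffix of the block. Each unit boundary $p$ that is a stop point of $\func{LE}_i$ is also a valid starting point from which $\func{LE}_{i+1}$ has its own stop points at every later boundary. The reachable stop points therefore include all boundary positions for both loops, giving $m_1\ge c+2k_1$ before even counting the pairwise splits.

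The main obstacle is comparing $k_1$ and $k_2$, because $y_1$ and $y_2$ may have different lengths. I would handle this by first applying Lemma~\ref{lem:1}: if $|y_1|\le |y_2|$, then $k_1\ge k_2$ under the same budget, and so $m_1\ge m_2$. If $|y_1|>|y_2|$, I would instead restrict $y_1$ to a single-symbol unit $a\in\Sigma(\func{LE}_i)\cap\Sigma(\func{LE}_{i+1})$. Such a unit is feasible because both loops accept it, so this reduces to the first case. It is the only place where feasibility of shortened units must be argued, and I would appeal to the note under Heuristic~\ref{ass:1} that the loops are unbounded or have large capacity.

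Finally, Heuristic~\ref{ass:1} gives $C_r(s_a)\ge C_r(s_a')$, and the strict inequality whenever $m_1>m_2$.
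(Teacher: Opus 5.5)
Your core argument is the paper's. You invoke Heuristic~\ref{ass:1} and note that on the $y_2$-block only one of the two loops acquires stop points while the other's count stays at zero. On the $y_1$-block both loops acquire stop points at the unit boundaries, and monotonicity in $m$ finishes the argument.

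The genuine gap is your case $|y_1|>|y_2|$. Replacing $y_1$ by a single symbol $a\in\Sigma(\func{LE}_i)\cap\Sigma(\func{LE}_{i+1})$ shows only that \emph{some} intersection unit does at least as well as $y_2$. The lemma fixes $y_1$ and asserts the comparison for that unit, so this step proves a different statement.

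Nor can this case be rescued by sharper counting. Use the per-loop boundary count $\lfloor n/|y|\rfloor$ from the proof of Lemma~\ref{lem:1} and the additive total $m_i+m_{i+1}$ the paper uses. The $y_1$-infix then gives $2\lfloor n/|y_1|\rfloor$, against $\lfloor n/|y_2|\rfloor$ for $y_2$. The latter is strictly larger once $|y_1|>2|y_2|$ and $n$ is large enough, so Heuristic~\ref{ass:1} would then favour $y_2$. Appealing to pairwise splits, which you mention as extra slack, does not help uniformly in $n$ either. In the boundary model, if $n$ is so small that $y_1$ fits once while $y_2$ fits many times, the $y_1$-infix has essentially no splits to offer.

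The paper's own proof never addresses unequal lengths. It writes both strings as $x\,y^{q}\,z$, and its inequality $m_i(y_1)+m_{i+1}(y_1)\ge m_i(y_2)+m_{i+1}(y_2)$ tacitly compares infixes with the same number of copies. Under that assumption, the loop consuming $y_2$ gets at least as many stop points on the $y_1$-infix, and the other loop's stop points are pure surplus.

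The right repair of your proposal is to make that normalization explicit as the scope of the lemma, rather than claiming the general case by substitution. Keep your first case: $|y_1|\le|y_2|$, hence $k_1\ge k_2$. Your factor $2$ stretches this only to roughly $|y_1|\le 2|y_2|$, up to floor effects.

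In that case, also cite the count $\lfloor n/|y|\rfloor$ directly rather than Lemma~\ref{lem:1} itself. That lemma assumes both units are feasible for all participating loops, which is exactly what $y_2$ violates here.
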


\begin{figure}[htbp]
    \centering
    \includegraphics[width=0.8\linewidth]{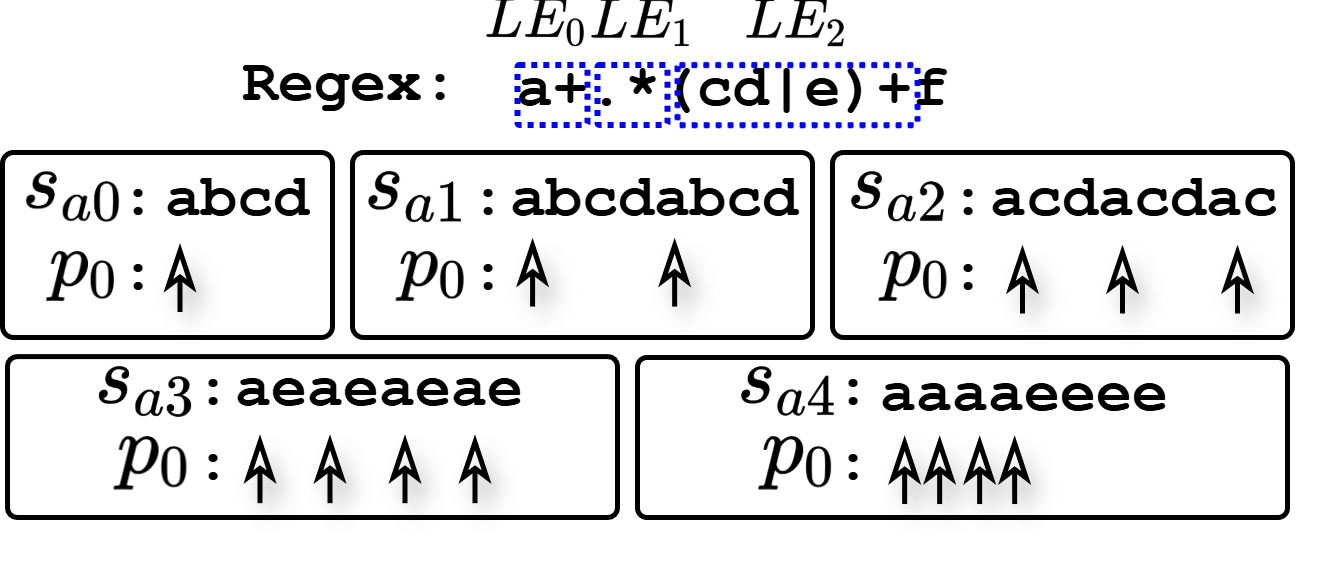}
    \caption{Illustrative example of attack string generation.}
    \label{fig:example}
\end{figure}

To exemplify Lemma~\ref{lem:2},
    we consider $s_{a1}$ and $s_{a2}$ in Figure~\ref{fig:example}.
    If \(LE_0\) and \(LE_1\) select different repetition units, $a$ and $b$ respectively under intersecting alphabets condition as in $s_{a1}$, rather than sharing the same unit $a$ as in $s_{a2}$,
    the stop points of \(LE_0\) reduce from \textbf{three} in \(s_{a2}\) to \textbf{two} in \(s_{a1}\).
By Heuristic~\ref{ass:1}, \(s_{a2}\) incurs a higher matching cost.

\noindent
\begin{lemma}[Separated Stop Points Increase Matching Cost]\label{lem:3}
Let $r$ contain adjacent $\func{LE}_i$ and $\func{LE}_{i+1}$ with $\Sigma(\func{LE}_i)\cap\Sigma(\func{LE}_{i+1})\neq\varnothing$,
and they can both consume the same substring of $s_a$ as in Definition~\ref{def:redos}.
Fix $x\in\mathcal{L}(\varphi_1)$ and $z\notin\mathcal{L}(\varphi_3)$. 
Denote by $\mathcal{P}_i$ and $\mathcal{P}_{i+1}$ the sets of stop points of $\func{LE}_i$ and $\func{LE}_{i+1}$ within the infix, respectively.
When constructing infix,
    if the stop points in $\mathcal{P}_i$ and $\mathcal{P}_{i+1}$ are \emph{separated} (i.e., all stop points of $\func{LE}_i$ occur before those of $\func{LE}_{i+1}$) 
    rather than \emph{intervaled} (i.e., the stop points of the two \func{LE}s interleave along the string),
    then for any fixed length budget $n$, 
    it yields no lower matching cost.
\end{lemma}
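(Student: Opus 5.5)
The plan is to reduce Lemma~\ref{lem:3} to Heuristic~\ref{ass:1}. We compare two infixes of the same length budget $n$, built from the same multiset of repetition units, with the same prefix $x$, the same failing suffix $z$, and the same participating \func{LE}s; the only difference is the order of the units. Since the participating loops and the failing location are unchanged, it suffices to show that the number $m$ of reachable stop points in the separated layout is at least the number in the interleaved layout. The heuristic then gives $C_r(s_a^{\mathrm{sep}},\cdot,n,m^{\mathrm{sep}})\ge C_r(s_a^{\mathrm{int}},\cdot,n,m^{\mathrm{int}})$.

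\textbf{Counting reachable stop points.} Following the description of Spencer's algorithm in the background section, $\func{LE}_i$ greedily extends to its last stop point $p_i$, and $\func{LE}_{i+1}$ then starts at $p_i$. A stop point of $\func{LE}_{i+1}$ is reachable for a given choice of $p_i$ exactly when it lies in $[p_i,p_{i+1}]$ and the segment between them is consumable by $\func{LE}_{i+1}$. The reachable configurations are therefore pairs $(p_i,p_{i+1})$ with $p_i\le p_{i+1}$, each satisfying the consumability constraints. I would write $m$ as the sum, over reachable $p_i\in\mathcal{P}_i$, of the number of reachable $p_{i+1}\in\mathcal{P}_{i+1}$ with $p_{i+1}\ge p_i$.

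\textbf{Exchange argument.} Take the interleaved layout and find an adjacent pair in which a unit that contributes a stop point to $\mathcal{P}_{i+1}$ precedes a unit that contributes a stop point to $\mathcal{P}_i$. Swap the two. Two things need checking.
\begin{itemize}
\item The swap does not destroy any stop point of $\func{LE}_i$. The unit moved earlier lies in $\Sigma(\func{LE}_i)$, so the prefix of $\func{LE}_i$'s consumption region stays consumable.
\item The swap does not decrease the number of pairs $(p_i,p_{i+1})$ with $p_i\le p_{i+1}$. After the swap, one $\mathcal{P}_i$ point moves left of one $\mathcal{P}_{i+1}$ point. This adds the pair (that $p_i$, that $p_{i+1}$) and removes no existing ordered pair.
\end{itemize}
Repeating the swap, a bubble-sort argument reaches the separated layout in finitely many steps. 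Along the way $m$ is monotone non-decreasing, and it strictly increases on at least one swap whenever the original layout was genuinely interleaved. The hypothesis $\Sigma(\func{LE}_i)\cap\Sigma(\func{LE}_{i+1})\neq\varnothing$, together with joint consumability as in Definition~\ref{def:redos}, ensures that after separation $\func{LE}_{i+1}$ can still start from every $p_i$ and consume through to its own block. Without this, separation could make $\func{LE}_{i+1}$ unreachable.

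\textbf{Main obstacle.} The hard step is the consumability check under swaps. In an interleaved string, a unit in $\Sigma(\func{LE}_i)\setminus\Sigma(\func{LE}_{i+1})$ sitting between $p_i$ and a later $p_{i+1}$ cuts off that pair entirely, so some configurations counted naively are not reachable. I would first attempt the proof in the clean case where every unit lies in the intersection, using Lemma~\ref{lem:2}'s setting, so that all ordered pairs are reachable. In that case the count is $\#\{(a,b)\in\mathcal{P}_i\times\mathcal{P}_{i+1}: a\le b\}$, which is maximized at $|\mathcal{P}_i||\mathcal{P}_{i+1}|$ exactly by the separated layout. I would then handle exclusive units by observing that separation places them outside the window between $\mathcal{P}_i$ and $\mathcal{P}_{i+1}$, so they can only block fewer pairs than in an interleaved arrangement.
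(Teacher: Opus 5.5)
Your clean-case argument is the paper's proof. The paper fixes $j$ and $k$ stop points and notes that, for each $p_{i,l}$, the number $M$ of $\func{LE}_{i+1}$ stop points reachable to its right is at most $k$, with equality in the separated layout; it sums over $l$ to get $m^{(2)}\ge m^{(1)}$ (your bound $\#\{(a,b)\in\mathcal{P}_i\times\mathcal{P}_{i+1}: a\le b\}\le|\mathcal{P}_i||\mathcal{P}_{i+1}|$) and then invokes Heuristic~\ref{ass:1}, so the bubble-sort exchange is just a longer route to the same inequality.

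The paper never treats units consumable by only one loop; it simply assumes, via the joint-consumability hypothesis, that every later $\func{LE}_{i+1}$ stop point is reachable, so your clean case is all that is needed. Do not rely on your sketched extension as worded: an $\func{LE}_i$-only unit inside the separated $\mathcal{P}_i$ block still lies between every earlier $p_i$ and all of $\mathcal{P}_{i+1}$, so separation does not place it outside the window.
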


For Lemma~\ref{lem:3}, 
we refer to \(s_{a3}\) and \(s_{a4}\) in Figure~\ref{fig:example}. 
When the regex matcher evaluates \(\func{LE}_0\) using the last stop point in each string (\(s_{a3}[6]\), \(s_{a4}[3]\)), 
a failed match causes the matcher to backtrack \textbf{once} at \(s_{a3}[7]\) but \textbf{four} times at \(s_{a4}[4:7]\) to select the stop points of \(\func{LE}_1\).
The same trend holds when \(\func{LE}_0\) chooses other stop points,
    resulting in an increased matching cost in $s_{a4}$.

\noindent
\begin{theorem}[Principles for Cost-Increasing Attack String Generation]\label{thm:1}
An attack string attains higher matching cost when each $\func{LE}$ employs the shortest feasible repetition unit that is shared by as many $\func{LE}$s as possible.  
If two adjacent \func{LE}s have intersecting alphabets, 
    a repetition unit should be selected from their alphabet intersection.   
Each selected repetition unit should be repeated independently and then concatenated to ensure the stop points are separated.
\end{theorem}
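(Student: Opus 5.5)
The theorem follows by chaining Lemmas~\ref{lem:1}--\ref{lem:3} through Heuristic~\ref{ass:1}. The cost is compared across attack strings $s_a = x\,w\,z$ with the same prefix $x\in\mathcal{L}(\varphi_1)$, the same disruptive suffix $z\notin\mathcal{L}(\varphi_3)$ (hence the same failing location), and the same length budget $n$, so that only the infix $w$ varies. By Heuristic~\ref{ass:1}, it suffices to show that each of the three prescriptions, applied in turn, never decreases the number $m$ of reachable stop points contributed by the participating \func{LE}s, and never decreases the backtracking work that Lemma~\ref{lem:3} attributes to the layout. Because every transformation keeps the participating \func{LE}s and the failing location fixed, the monotonicity of $C_r$ can be applied at each step and the inequalities composed transitively.

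\textbf{Step 1.} Start from an arbitrary feasible infix built from units $y$. For each \func{LE}, replace its unit by the shortest feasible one. Lemma~\ref{lem:1} gives $C_r$ non-decreasing, since with fixed $n$ the unit can be repeated $\lfloor n'/|y|\rfloor$ times, which yields more stop points.

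\textbf{Step 2.} Wherever adjacent $\func{LE}_i,\func{LE}_{i+1}$ have intersecting alphabets, restrict the choice to a unit from $\Sigma(\func{LE}_i)\cap\Sigma(\func{LE}_{i+1})$. Lemma~\ref{lem:2} again gives a non-decreasing cost. The point needing care is that Steps~1 and~2 may conflict: the shortest unit of one loop need not lie in the intersection. Since a single symbol of the intersection is itself a unit of length $1$, the minimum over units drawn from the intersection is already the global minimum length. So both criteria can be met simultaneously, and "shared by as many \func{LE}s as possible" follows by applying Lemma~\ref{lem:2} pairwise along the chain of adjacent loops. The chain argument is by induction on the number of participating \func{LE}s. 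At each step a shared unit adds stop points to both neighbours, by the counting in Lemma~\ref{lem:2}, so $m$ is maximized when the common unit is shared by a maximal contiguous run.

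\textbf{Step 3.} With the units fixed, compare interleaved and separated layouts $y_i^{q_i}y_{i+1}^{q_{i+1}}$ of equal total length. Lemma~\ref{lem:3} shows that separation never lowers the cost for each adjacent pair. Applying it successively to each adjacent pair, each time holding the other blocks fixed, transforms any interleaved infix into the fully concatenated form without ever decreasing the cost.

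\textbf{Main obstacle.} I expect the hardest part to be ensuring that the steps compose. Each lemma is stated "all else fixed", and I must check that a later step does not undo the gain of an earlier one. For example, separating blocks in Step~3 could change which stop points are reachable for a non-adjacent \func{LE}, and changing units in Step~2 could shift the failing location. I would address this by choosing the order units, then layout. I would then argue that Step~3 preserves the multiset of characters and therefore the count of reachable stop points for each loop, so Heuristic~\ref{ass:1} still applies and only the extra gap effect of Lemma~\ref{lem:3} is added. The conclusion is thus a non-strict "no lower cost" statement, with strict increase whenever one of the steps strictly increases $m$.
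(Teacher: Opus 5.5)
Your route is the paper's: Theorem~\ref{thm:1} gets no argument beyond the proofs of Lemmas~\ref{lem:1}--\ref{lem:3}, with each clause read off one lemma via Heuristic~\ref{ass:1}. The trouble is in two steps you add on top of that. The first is the reconciliation in Step~2, which is false. Feasibility means $y\in\mathcal{L}(\varphi_2)$, not merely that the symbols of $y$ lie in the loops' alphabets, so a symbol of $\Sigma(\func{LE}_i)\cap\Sigma(\func{LE}_{i+1})$ is generally not a unit.
\begin{itemize}
\item For an LIL with $|S_e|\ge 2$, every string of $\mathcal{L}(\varphi_2)$ contains $S_e$; this is why the paper takes $y=S_e$.
\item For $\varphi_2=(ab)^*(ab\mid c)^*$, neither $a$ nor $b$ is in $\mathcal{L}(\varphi_2)$. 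The shortest shared unit is $ab$, yet the globally shortest feasible unit is $c$, which only one loop can consume.
\end{itemize}
So ``shortest'' and ``shared'' can genuinely conflict, and your claim that the minimum over the intersection equals the global minimum fails. The conflict comes from your ordering. Lemma~\ref{lem:1} is stated only for units feasible for \emph{all} participating \func{LE}s, so choosing the shortest unit per loop in Step~1 is not an instance of it. Lemma~\ref{lem:2}, by contrast, carries no length condition. Apply Lemma~\ref{lem:2} first to move to a shared unit, then Lemma~\ref{lem:1} among the shared units. This yields the shortest unit among the shared ones, which is the intended reading and exactly what the EOL rule does (intersect the overlap sets, then take the shortest element).

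The second problem is that the invariant you propose for composing the steps is wrong. In the proof of Lemma~\ref{lem:3}, the relevant count is $m=\sum_{l}M(s_a,\func{LE}_{i+1},p_{i,l}\rightarrow n)$: summed over the stop points $p_{i,l}$ of $\func{LE}_i$, the stop points of $\func{LE}_{i+1}$ still reachable after backtracking to $p_{i,l}$. Separation is precisely what raises this count, and it is not determined by the multiset of characters. Moreover, applying Heuristic~\ref{ass:1} in both directions forces equal $C_r$ whenever $m$ is equal. There is therefore no room for an ``extra gap effect'' on top of an unchanged $m$. You do not need an invariant at all. Each step is an instance of one lemma's hypotheses (same $x$, $z$, $n$ and participating \func{LE}s), so the inequalities chain by transitivity and no later step can undo an earlier gain.
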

\vspace{-1em}

%% file: Tex/pufferdos.tex
In this section, we present \sysname, a hybrid framework that operationalizes the formal foundations of attack generation into an efficient, structure-driven attack string generation pipeline (Section~\ref{sec:att_str_gen}) and augments it with ReDoS-specific, compositional concolic refinement to ensure practical effectiveness (Section~\ref{sec:concolic}).
Figure~\ref{fig:workflow} presents an overview of \sysname, which consists of three core components.

\noindent\textbf{(1) Vulnerable Regex Detection:} 
Given a target program, 
    we extract all regexes from the source code by locating regex-related API calls. 
We then apply ReDoS detectors to identify vulnerable regexes. 

\noindent\textbf{(2) Structure-Driven Attack String Generation:} 
Building upon the formal foundations in Section~\ref{sec:study}, 
    we devise a taxonomy of three regex vulnerable patterns and instantiate pattern-specific, structure-driven rules for attack string generation. 
For each vulnerable regex $r$, 
    we parse $r$ into subregexes, identify the looping expressions (\func{LE}s), and compute their alphabets \(\Sigma(\func{LE})\). 
We then classify $r$ into the appropriate pattern and synthesize the infix under the corresponding rule. 
Finally, we derive the prefix and suffix from the non-loop segments (adjusting the suffix to ensure the string remains outside \(\mathcal{L}(r)\)) and concatenate the prefix, infix, and suffix to obtain a complete attack string.

\noindent\textbf{(3) ReDoS-Specific Concolic Refinement:}
To ensure the generated attack string can propagate through real program and reach the vulnerable regex invocation, 
    we first identify the target program entry points and construct call graphs to extract feasible call paths from entry point to the target regex. 
We then perform concolic unit testing to generate ReDoS-specific function summaries,   
    followed by a system-level concolic execution to resolve interprocedural constraints. 
The attack string is refined accordingly, 
    ensuring exploitability in realistic execution environments.


\begin{figure*}[t]
    \centering
    \includegraphics[width=0.97\textwidth]{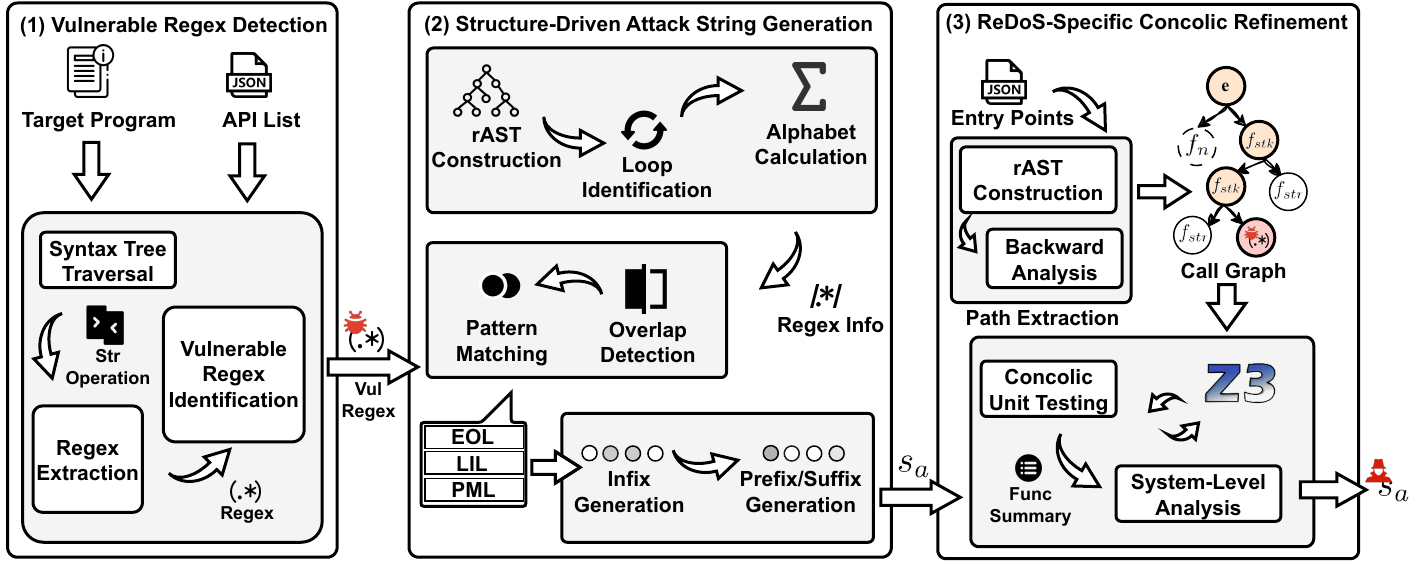}
    \caption{\sysname workflow.}
    \label{fig:workflow}
\end{figure*}

\vspace{-1em}
\subsection{Vulnerable Regex Detection\label{sec:detect}}
\vspace{-1em}
The vulnerable regex detection component takes a target program as input, statically analyzes the source code to locate calls to regex-related APIs (e.g., matching, compilation, substitution) through function name matching, and extracts the regexes provided as string-literal arguments of the APIs.

Many regexes are not passed directly to APIs but built dynamically through variables assignment, concatenation, or interpolation. To handle these, \sysname scans source files, traverses abstract syntax trees, and tracks string provenance~\cite{doyensec_regexploit_2021}, recursively resolving variable definitions to reconstruct the concrete regex when a regex API is invoked.

For each extracted regex, \sysname applies ReDoS detectors to identify vulnerable cases and retains only those confirmed as vulnerable for subsequent analysis.

\subsection{Structure-Driven Attack String Generation\label{sec:att_str_gen}}

Given the vulnerable regexes identified in Section~\ref{sec:detect}, this section describes how \sysname generates a high-cost attack string for each regex. To systematically approach attack generation, we first establish a pattern taxonomy that categorizes vulnerable regex constructs by their exploitation characteristics (Section~\ref{sec:taxmony}). We then detail how this taxonomy guides our attack string generation procedure (Section~\ref{sec:attgen}).


\subsubsection{ReDoS Pattern Taxonomy\label{sec:taxmony}}
Building on the matching cost analysis in Section~\ref{sec:study}, we propose a taxonomy of three vulnerable patterns that (i) cover all patterns identified in prior works (see Appendix~\ref{apx:pattern}) and (ii) prescribes pattern-specific attack string generation strategies. These strategies are tailored to markedly increase matching cost in backtracking regex engines.


\begin{pattern}[\textnormal{\textbf{Exponential One-Loop (EOL)}}]
An \emph{EOL} consists of a single pathological \func{LE} whose unfolding \(\mathcal{U}(\func{LE})\) contains at least two distinct subregexes \(r'_i, r'_j\) (\(i \ne j\)) such that \(\Sigma(r'_i) \cap \Sigma(r'_j) \ne \varnothing\). For example, regex \texttt{\string^(ab|a|b)+\$} exhibits EOL pattern.
\label{def:eol}
\end{pattern}

To generate attack string for $r$ with EOLs, 
    \sysname computes the overlap sets \(\Sigma(r'_i) \cap \Sigma(r'_j) \ne \varnothing\) for each EOL.
By Lemma~\ref{lem:1},
    if only one EOL exists, 
    it selects and repeats one of the shortest elements from its overlap set to form the infix.
For multiple EOLs, 
     by Theorem~\ref{thm:1},
    \sysname first computes the intersection of their overlap sets.
If the intersection is non-empty, 
    the shortest element from the intersection is chosen as repetition unit $y$ and repeated to construct infix; 
otherwise, 
    \sysname repeats the shortest element from each individual overlap set and concatenate the results to form an infix with separated stop points.

\begin{pattern}
[\textnormal{\textbf{Loop-Intersect-Loop (LIL)}}]
An \emph{LIL} consists of two adjacent \func{LE}s,
    separated by a non-loop segment \(S_e\), 
    written as \(r = \func{LE}_1\, S_e\, \func{LE}_2\). It holds that \(\Sigma(\func{LE}_1) \cap \Sigma(\func{LE}_2) \ne \varnothing\) and \(\Sigma(S_e) \subseteq \Sigma(\func{LE}_1) \cap \Sigma(\func{LE}_2)\).
    Separator \(S_e\) may be empty.
    For example, regex $.^*a.^*$  exhibits LIL pattern.
\label{def:lil}
\end{pattern}

To generate attack string,
    for each LIL pattern,
     \sysname sets repetition unit \(y = S_e\) as it can be matched by both \func{LE}s and enables them to consume overlapping substrings without being obstructed by the non-loop segment. 
If the segment is empty,
    one of the shortest element from \(\Sigma(LE_1) \cap \Sigma(LE_2)\) is selected instead. 
For multiple LILs, e.g., $r=\func{LE}_1\, S_{e_1}\, \func{LE}_2\, S_{e_2}\, \func{LE}_3$, by Theorem~\ref{thm:1}, 
    the repetition unit $y_i$ of each LIL pattern is repeated to form a base string \(b_i = y_i^q \),
    and all resulting $b_i$ are concatenated to construct the infix.

\begin{pattern}[\textnormal{\textbf{Polynomial Multi-Loop (PML)}}]
A \emph{PML} consists of \(k\) \func{LE}s, such that \(\forall i \in [1, k-1],\;
\Sigma(\func{LE}_i) \cap \Sigma(\func{LE}_{i+1}) = \varnothing
\;\text{or}\;
\Sigma(S_e) \subsetneq \Sigma(\func{LE}_i) \cap \Sigma(\func{LE}_{i+1})\). 
That is, adjacent \func{LE}s have disjoint alphabets or the non-loop segment cannot be accepted by both of the \func{LE}s.
For example, regex $\backslash s^*a\backslash s^*$ exhibits the PML pattern.

\label{def:pml}
\end{pattern}

Here, 
    $S_e$ denotes the non-loop segment between each \func{LE} pair $(\func{LE}_i, \func{LE}_{i+1})$.
For regex with PMLs, 
     \sysname falls back to the traditional attack string generation.  
We consider the target regex as \(r = \func{LE}_1,\dots,\func{LE}_k\),
    by Lemma~\ref{lem:1},
    \sysname selects the shortest symbol \(a_i\in\Sigma(\func{LE}_i)\) for each \func{LE},
    forms the base string \(b=a_1a_2\cdots a_k\) as repetition unit,
    and repeats the base string to build the infix.

\subsubsection{Attack String Generation\label{sec:attgen}} 
Algorithm~\ref{alg:attgen} shows the whole pipeline of generating attack string.
It takes the identified vulnerable regexes as input.
For each regex, 
it expands the regex structure (Lines 4--10),
matches the vulnerable patterns, and generates the corresponding $s_a$ (Lines 11--25).   

Since a regex may contain multiple alternatives introduced by the disjunction operator “$|$”, leading to distinct matching behaviors,
we define a \textit{regex path} (rPath) as an ordered, non-branching sequence of subregexes \(r'\) representing one specific alternative of \(r\).  
Given \(r\), \sysname expands it into a \textit{regex Abstract Syntax Tree} (rAST), where \(r\) is the root and internal nodes denote intermediate unfoldings.
Branching in the rAST arises from (i) “$|$” operators, with each disjunctive alternative as an independent child node, and (ii) \func{LE}s allowing zero repetitions (e.g., \verb|.*| or \verb|a{0,1024}|), each producing two children, one excluding the \func{LE} to represent zero repetitions and the other including it to represent one or more repetitions.
Although splitting original regex $r$ into rPaths  may omit certain branches of the original regex,
    any attack string synthesized for a given rPath remains a valid attack string for $r$, 
    as the language of each rPath $\mathcal{L}{(rPath)}$ is a subset of $\mathcal{L}{(r)}$.
\vspace{-0.2em}
\begin{figure}[htbp]
    \centering
    \includegraphics[width=0.92\linewidth]{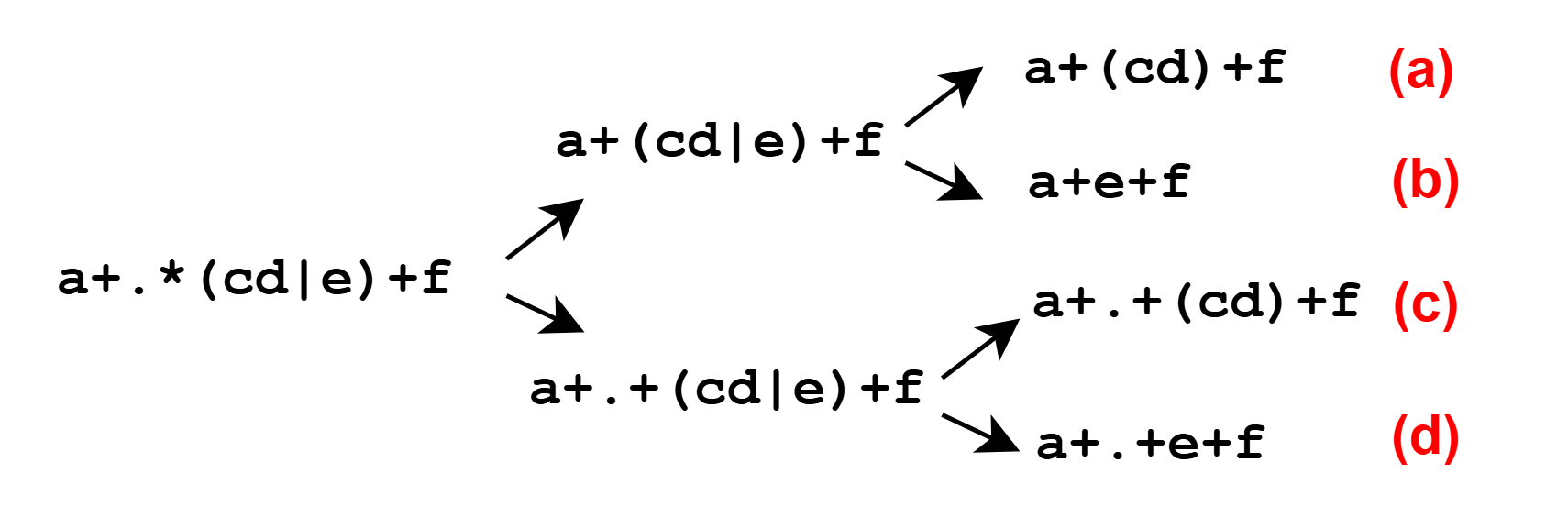}
    \caption{An illustrative example of rAST and rPath.}
    \label{fig:example2}
\end{figure}

During expansion, 
\sysname checks whether a \func{LE} can be unfolded into subregexes with overlapping alphabets that meet the EOL condition defined in Pattern~\ref{def:eol}.  
If found, it marks and reports the EOL pattern and terminates; otherwise, decomposition continues until all leaf nodes are produced, each corresponding to one rPath.
As an example, in Figure~\ref{fig:example2}, the regex is expanded into an rAST with four leaf nodes, each leaf node corresponding to one rPath.

For each rPath,
    \sysname sequentially traverses all subregexes in it.
Whenever two adjacent \func{LE} are encountered, 
    \sysname calculates \(\Sigma(\func{LE})\) and the overlap set \(\Sigma(\func{LE}_i) \cap \Sigma(\func{LE}_{i+1})\).
If the overlap set is non-empty,  
    \sysname examines the separating segment between the two \func{LE}s to confirm all characters within it belong to the overlap set.
A LIL pattern is identified whenever this criterion holds.  

When multiple rPaths contain EOL or LIL patterns, 
    \sysname selects the rPath with the largest number of such patterns and the shortest repetition unit; 
    if several satisfy these criteria equally, one is chosen arbitrarily. If no rPath contains an EOL or LIL pattern, the regex is classified as PML.
    For example, in Figure~\ref{fig:example2}, 
\sysname identifies that rPaths (a)--(d) all contain LIL patterns: 
(a) and (b) each contain one, while (c) and (d) contain two and are therefore shortlisted for comparison.

Once a vulnerable pattern is identified \sysname initiates $s_a$ construction by generating the infix, 
    applying pattern specific strategies for each case as described in Section~\ref{sec:taxmony}. After generating the infix, \sysname builds the prefix by collecting all characters occurring before the first \func{LE}. The suffix is formed from the characters after the last \func{LE} with the final character removed, ensuring $s_a\notin\mathcal{L}(r)$.

\begin{algorithm}[t]
  \caption{\textsc{Attack String Generation}}
  \label{alg:attgen}
  \DontPrintSemicolon
  \KwIn{$R$}
  \KwOut{$S$}

  $S \gets \emptyset$\;

  \ForEach{$r \in R$}{
    $EOL\_flag \gets \textbf{false}$; 
    $LIL\_flag \gets \textbf{false}$; 
    $det\_p \gets \varnothing$; 
    $rAST \gets r$\;

    \While{$\exists\, n \in rAST \land n.\textit{expandable} = \textbf{true}$}{
  \uIf{``$|$'' $\in n$}{
    $rAST \gets \textsc{getSplitChild}(n)$\;
  }
  \ElseIf{n is \func{LE} allowing zero reps}{
    $rAST \gets \textsc{getLEChild}(n)$\;
  }
  \If{$n\;  matches\;   Pattern\;   1$}{
    $EOL\_flag \gets \textbf{true}$; 
    $det\_p \gets n$; 
    \textbf{break}\;
  }
}

    \If{$EOL\_flag$}{
      $w \gets \textsc{InfixEOL}(det\_p)$\;
    }
    \Else{
      $P \gets \textsc{getRPATH}(rAST)$\;
      \ForEach{$p \in P$}{
        \uIf{$p\; matches\; Pattern\; 2$}{
          Tag($p$) $\gets$ LIL; 
          $LIL\_flag \gets \textbf{true}$\;
        }
      }
      \If{\textbf{not} $LIL\_flag$}{
        Tag($p$) $\gets$ PML\;
      }
      $\hat{p} \gets \textsc{SelectPath}(\{p \mid Tag(p) \neq \text{null}\})$\;
      \uIf{Tag($\hat{p}$) = LIL}{
        $w \gets \textsc{InfixLIL}(\hat{p})$\;
      }
      \Else{
        $w \gets \textsc{InfixPML}(\hat{p})$\;
      }
    }

    $pre \gets \textsc{Prefix}(w, r)$; 
    $suf \gets \textsc{Suffix}(w, r)$; 
    $s_a \gets pre \cdot w \cdot suf$\;
    $S \gets S \cup \{(r, s_a)\}$\;
  }

  \Return{$S$}
\end{algorithm}




\subsection{ReDoS-Specific Concolic Refinement\label{sec:concolic}}

Once an attack string $s_a$ is constructed,
    \sysname performs compositional concolic execution to systematically refine it toward inputs that trigger the vulnerable regexes.

As shown in Figure~\ref{fig:workflow}, \sysname takes as input the target program $P$, 
    a set of vulnerable regex $r$, and their corresponding $s_a$. 
It operates in three phases: 
    (1) extracting execution paths $p$ from program entry points $e$ to each $r$ and record all reachable functions $f$.
    (2) performing concolic unit testing on selected functions from $f$ along $p$ to derive function  summaries $\phi_f$, and 
    (3) composing system-level concolic execution by reusing $\phi_f$ to refine $s_a$ 

\noindent  
\textbf{(1) ReDoS-Reachable Path Extraction.}
\sysname constructs a static call graph of \(P\) and uses developer provided entry points \(E=\{e_1,\dots,e_m\}\).
    If no entry points supplied,
    \sysname heuristically selects externally invocable functions (e.g. Python callables not prefixed with an underscore) that accept at least one string parameter as entry points. 

For each \(r\), 
    \sysname traverses the call graph backwardly from call sites of \(r\) to identify all \(e\in E\) that can be reached, 
    recording all interprocedurally reachable functions \(f\) as the aggregated path \(p\) for the pair \((r,e)\).
    Within each path \(p\),
    functions \(f\) invoked before reaching \(r\) are classified into three categories:
    (i) \(f_{str}\): functions that, before reaching \(r\), accept a string parameter \(p_{str}\) and return a string or a constant value; 
    (ii) \(f_{n}\): functions that either do not take string parameters or do not return string or constant values before reaching $r$; and 
    (iii) \(f_{stk}\): functions that remain on the call stack when \(r\) is reached (i.e., invoked but not returned).

\noindent
\textbf{(2) Function Summarization via Concolic Unit Testing.}
For each $p$, 
    \sysname performs concolic unit testing to derive \emph{ReDoS-specific function summary} $\phi_f$ that capture only the features relevant to ReDoS attack string propagation for selected functions.
    
The function selection follows two rules:
(i) Function classified as \(f_{stk}\) are excluded, 
    as prior work~\cite{kim2024pbe} indicates that these functions embody the main control-flow semantics, where abstraction could compromise precision.
(ii) Functions accessing or modifying global variables are excluded,
    since their behaviors involve interprocedural interactions that unit testing cannot soundly capture.

For the selected functions,  
    those classified as \(f_n\) are assigned an empty summary (\(\phi_f = \emptyset\)) to skip them in subsequent analysis.  
For each selected \(f_{\text{str}}\),  
\sysname employs a heuristic approach that formulates four hypotheses:  
(i) $s_a \;\text{in}\; p_{\text{str}} \And ret = CONST$,  
(ii) $s_a \;\text{in}\; p_{\text{str}} \And s_a = ret$,  
(iii) $s_a = p_{\text{str}} \And s_a = ret$,  
(iv) $s_a \;\text{in}\; p_{\text{str}} \And s_a \;\text{in}\; ret$,
 where $ret$ represents the function return value and  $CONST$ represents constant value like True or False.
\sysname then performs concolic execution on each \(f_{\text{str}}\) to validate these hypotheses sequentially.  
If a counterexample is discovered for a hypothesis,  
\sysname proceeds to evaluate the next one;  
otherwise,
    the validated hypothesis is adopted as the function summary.  
If none of the hypotheses hold,
    \sysname follows traditional approaches, constructs \(\phi_f\) by disjunct all symbolic path explored during unit testing.


\noindent
\textbf{(3) System-Level Concolic Execution.}
After deriving \(\phi_f\), 
for each path \(p\), \sysname performs system-level concolic execution from the entry point \(e\).  
It instruments \(P\) with an assertion that \(s_a\) does not reach the invocation of \(r\).  
During execution, the SMT solver attempts to generate an input that falsifies this assertion while respecting the function summaries \(\phi_f\).  
If a counterexample is produced, \sysname treats it as an input that drives \(s_a\) to \(r\) under realistic execution and reports it as a refined, exploitable attack string.  
If no counterexample is found, the path is considered unreachable, and \sysname proceeds to the next path until either an exploitable attack string is found or all paths are exhausted.

\noindent
\textbf{Complexity and Correctness Discussion.}
The core principle of \sysname's ReDoS-specific compositional concolic refinement is to retain execution features relevant to attack string propagation while pruning irrelevant behaviors,
    thereby reducing summary size, computational cost, and path explosion. Following validation of the optimization proposed in prior work~\cite{godefroid2007compositional},
    we discuss the algorithm complexity and correctness of our approach.

Regarding algorithmic complexity,
    let \(P\) contain \(n\) functions, 
    each with \(b\) internal paths. 
    The DART algorithm~\cite{10.1145/1065010.1065036} may explore up to \(O(b^{n})\) interprocedural paths in the worst case. The SMART~\cite{godefroid2007compositional} mitigates this via traditional function summaries; assuming there are at most \(b\) internal paths per function, whole-program analysis explores \(O\bigl(n\cdot b\cdot 2^{x}\bigr)\) paths in the worst case, where \(x\) is the number of entries retained in each function summary. Our method reduces \(x\) by keeping only entries relevant to attack string propagation, thus lowering the overall complexity \(O\bigl(n\cdot b\cdot 2^{x}\bigr)\).

Concerning correctness, 
    \sysname does not modify the SMT solver, and every produced counterexample corresponds to a concrete execution; thus our refinement is \emph{precision-preserving relative to DART} that introduce no false positives. While we cannot eliminate false negatives entirely due to the under-approximation nature of concolic execution~\cite{vandenbogaerde2025abstracting,wang2024wacana,yun2018qsym}, our approach preserves the features relevant to attack string propagation, ensuring all feasible propagation paths remain symbolically analyzable and \emph{avoiding reductions in analyzable propagation coverage}.
A concrete example illustrating the step-by-step concolic refinement process 
is available in Appendix~\ref{appendix:concolic_example}.



\subsection{Implementation}
\vspace{-1pt}
We implement a prototype of \sysname for analyzing Python programs, 
    motivated by prior work that identifies Python as the language most vulnerable to ReDoS~\cite{davis2019rethinking}. 
\sysname is fully automated for attack string generation and concolic refinement and can be adapted to other languages by using a concolic executor for the target language, as all other components are language independent.
For detecting vulnerable regexes we use \tool{regexploit}~\cite{doyensec_regexploit_2021} to extract regexes and \tool{RENGAR}~\cite{wang2023effective} to identify vulnerable regexes as \tool{RENGAR} has been reported to achieve the best performance in prior work~\cite{wang2023effective}.
We then implement the rAST construction, vulnerable pattern identification and
    attack string generation pipeline on top of \tool{RENGAR}~\cite{wang2023effective}.
\sysname can be integrated with any ReDoS detector.
For constraint-guided refinement, 
    \sysname uses \tool{pytype}~\cite{pytype} to infer API argument type and identify entry points,
    constructs a call graph with \tool{PYCG}~\cite{salis2021pycg} followed by a custom call flow analysis,
    and builds the ReDoS-specific compositional concolic execution module on \tool{CrossHair}~\cite{crosshair}, a Z3~\cite{z3}-based concolic execution framework.
    We employ \tool{CrossHair} for its active maintenance and robust Python support, while our algorithm remains compatible with any concolic execution backend.
All the development, implementation and evaluation were conducted on Debian GNU/Linux~12, 32GB RAM, 1TB SSD.
\vspace{-0.5em}

%% file: Tex/experiments.tex
\vspace{-1em}
We evaluate the performance of \sysname  by answering the following research questions (RQs):

\noindent\hangindent=0.9em  \textbullet\ \textbf{RQ1 (Efficiency):} How efficient is \sysname at producing attack strings?

\noindent\hangindent=0.9em  \textbullet\ \textbf{RQ2 (Effectiveness):} How effective is \sysname at reproducing CVE-disclosed ReDoS vulnerabilities?

\noindent\hangindent=0.9em  \textbullet\ \textbf{RQ3 (Overhead):} Can \sysname exploit previously undisclosed ReDoS in real-world projects?

During the experiment, we use regex matching time to measure both attack success and efficiency.
To ensure stability, all reported results are averaged over three independent runs.
Additionally, 
    following common practice in program-analysis research~\cite{wang2019oo7,wei2023compiling}, attack string generator runtime overhead is evaluated by the execution time of each tool to reflect the computational cost of the analysis.

\vspace{-1em}
\subsection{Experiment Setup}
\vspace{-1em}
We create three datasets, a \emph{baseline dataset}, a \emph{CVE dataset} and a \emph{real-world project dataset}, 
    to separately assess the efficiency and real-world effectiveness of \sysname. 

\noindent
\textbf{Baseline dataset.} This dataset contains confirmed vulnerable regexes collected from real-world programs, 
    which are used for assessing whether \sysname can produce attack strings. 
In particular, we first collected two widely used regex corpora, \tool{Corpus}~\cite{chapman2016exploring} and \tool{RENGAR} dataset~\cite{rengarweb}, containing 131,701 regexes collected from real-world programs.  
We then applied five ReDoS detectors \tool{RENGAR}~\cite{wang2023effective}, \tool{ReScue}~\cite{shen2018rescue}, \tool{Regulator}~\cite{mclaughlin2022regulator}, \tool{Revealer}~\cite{liu2021revealer}, \tool{ReDoSHunter}~\cite{li2021redoshunter} with 100\% reported precision on these two corpora to detect the vulnerable regexes.
Their outputs are verified by measuring regex matching time, and eventually we confirmed \textbf{17,962 vulnerable regexes}.



\noindent
\textbf{CVE dataset.} This dataset contains the ReDoS-related CVEs to evaluate whether the attack strings generated by \sysname can reproduce known ReDoS attacks. 
Specifically, we collected ReDoS instances from the National Vulnerability Database~\cite{cve} (i.e., CVE-disclosed vulnerabilities).
In total, 
    to capture recent vulnerability trends and given that \sysname currently targets Python,
    we collected all 51 ReDoS-related CVEs reported in Python projects over the past five years.
    After excluding those without publicly available attack strings, 
     \textbf{31 CVEs} remained for our final analysis, 
     each providing verified attack strings that trigger the corresponding vulnerable regexes.

\noindent
\textbf{Real-world project dataset.} This dataset contains previously undisclosed ReDoS vulnerabilities whose corresponding vulnerable regexes were automatically detected by the ReDoS detectors. 
These vulnerabilities had not been publicly reported and lacked known attack strings at the time of collection. 
They are used to evaluate \sysname’s capability to synthesize new exploits.
To construct this dataset,
    we selected Python  projects  with at least 200 GitHub stars and 10k monthly downloads, 
    and then randomly chose 100 projects from the top 1,000 by downloads.
After manually reviewing their documentation, we excluded those that do not process any external user inputs (e.g., package management tools), leaving 74 candidates.
We further examined their codebases and discarded projects without regex usage, resulting in 38 remaining projects.
Finally, we applied the same five ReDoS detectors used in the baseline dataset and identified 12 projects containing vulnerable regexes.
These projects span database frameworks, LLM agents, text processing, and data analysis, with code sizes ranging from 7,860 to 443,427 lines.
In total, 
    \textbf{95 real-world vulnerable regexes} are included. 
Detailed statistics are provided in Appendix~\ref{apx:project}.

\subsection{Efficiency (RQ1)}
To evaluate attack string generation performance, we compare \sysname with \tool{RENGAR}~\cite{wang2023effective}, a state-of-the-art hybrid analysis tool.
Since \tool{RENGAR} has been shown to outperform existing approaches,
we use it as our primary benchmark.
We exclude \tool{EVILSTRGEN}~\cite{su2024towards} from comparison as it targets non-backtracking engines and is designed to generate attack strings of at least length $k$~\cite{su2024towards}, which is orthogonal to our goal of minimizing attack string length for backtracking engines.
The experiment is carried out Python’s default regex engine \tool{re}~\cite{python_re} for reflecting real-world deployment settings.
For each vulnerable regex in the benchmark dataset, 
    we ran \sysname and \tool{RENGAR} independently to produce attack strings. For every (regex, tool) pair, we then determined the minimum attack string length required to push the regex match time beyond a specified threshold.
Following the settings in prior usability studies~\cite{10.5555/2821575}, 
    we set the thresholds to 0.1s, 1s, and 10s, which correspond to the three human-perceptual limits for instantaneous, uninterrupted, and sustained response times.

Specifically,
    given a dataset $D$ containing $N$ regexes,
    for regex $r_i \in D, i \in [0,N]$, 
    let $L_{R}$ and $L_{P}$ be the minimal attack string lengths generated by \tool{RENGAR} and \sysname to reach matching time cost threshold $T$.  
We use the following two metrics to evaluate the efficiency.
\ding{182} We define $R_{T}^i$ as the proportional increase in length required by \tool{RENGAR} over \sysname to reach threshold $T$ when attacking $r_i$, 
    and $\bar{R_{T}}$ denotes the mean difference across dataset $D$.
Formally we have  $R_{T}^i = \frac{L_{R} - L_{P}}{L_{P}}$ and $\bar{R_{T}} = \tfrac{1}{N}\!\sum^N_{i=0} R_{T}^i$.
A positive $\bar{R_{T}}$ indicates that \tool{RENGAR} requires longer inputs to reach the threshold, 
whereas a negative value implies that \sysname does.
\ding{183} 
We then compute the percentage $P_T$ for which $L_R > L_P$, in order to compare the relative effectiveness of \tool{RENGAR} and \sysname in reaching the threshold $T$ across the entire dataset.
To avoid trivial variations, cases with relative deviation 
$|L_{R} - L_{P}| / \min(L_{R}, L_{P}) < 5\%$ 
are regarded as equivalent and excluded from the count.
\input{Tex/Table/baseline}

\noindent
The results are listed in Table~\ref{tab:baseline_avg}.
When setting the threshold to 0.1s, 
    \sysname outperforms \tool{RENGAR} in 92.8\% of the cases, with \tool{RENGAR} requiring attack strings that are on average 97.2× (times) longer to reach the same threshold.
Such performance remains consistent as the threshold increases.
At 1s, 
   \sysname performs better in 92.7\% of the cases, with \tool{RENGAR} needing strings that are 1128.5× longer;
At 10s, the corresponding values are 93.7\% and 3872.4, 
    respectively.
While considering the runtime overhead of string generation, 
    \tool{RENGAR} averagely takes 2.47 seconds to generate an attack string, whereas \sysname requires 5.11 seconds due to its rAST and rPath analyses for deeper structural exploration.
    
\begin{figure*}[htbp]
    \centering
    \includegraphics[width=0.95\textwidth]{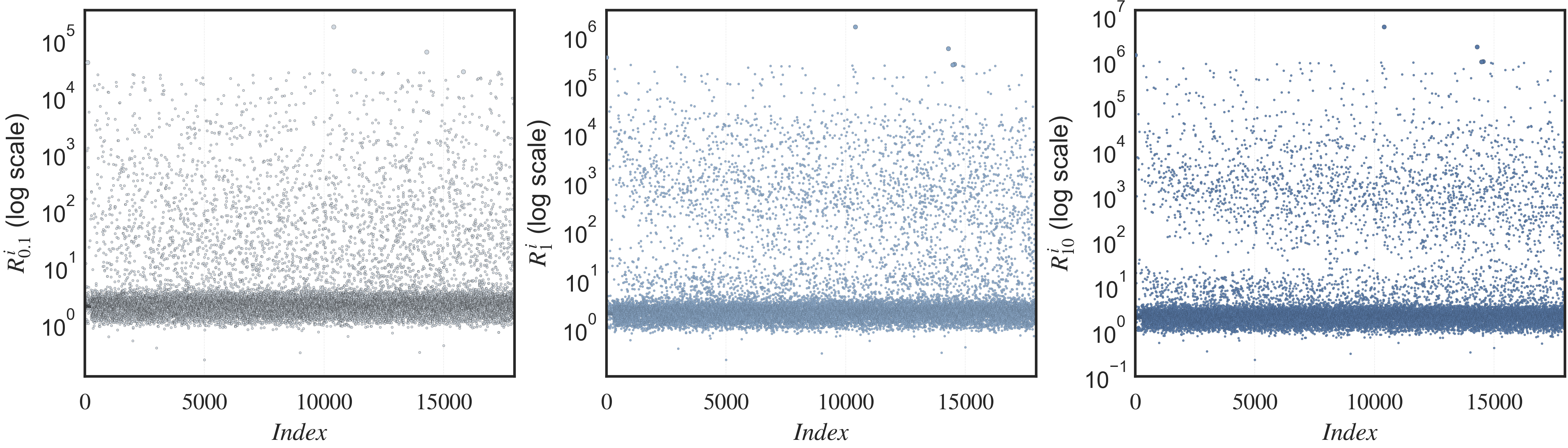}
    \caption{Distribution of $\bar{R}_T$ on baseline. The extreme values reach $10^{5}$, $10^{6}$, and $10^{7}$ at $T=0.1$, $1$, and $10$, respectively. 
    \label{fig:result_rq1}}
\end{figure*}
\noindent
\textbf{Scrutiny of Results.}
We firstly study the distribution of $R^i_{T}$ across the baseline dataset.
As shown in Figure~\ref{fig:result_rq1},
    for the majority of regexes, 
    \tool{RENGAR} requires roughly 3x longer attack strings to reach a given threshold compared with \sysname.
Nevertheless,
    \sysname can frequently produce exceptionally efficient attack strings,
    with $R^i_{T}$ reaching between $10^{4}$ and $10^{7}$ as the threshold increases,
    showing that \sysname can expose weaknesses in regexes and trigger far more severe ReDoS behaviors.

To scrutinize the observations,
    we then randomly selected 300 regexes from the baseline dataset and manually examined their results (see Table~\ref{tab:scrutinize_results}). 
Among these, 5 (1.6\%) regexes follow the EOL pattern, 133 (44.3\%) the LIL pattern, and 162 (54.0\%) the PML pattern. 
Across all thresholds,
    \tool{RENGAR} outperforms \sysname in 9 cases, while \sysname surpasses \tool{RENGAR} in 183 cases. 
The remaining 108 cases with results differences within 5\% are omitted as comparable performance.

For regexes with \underline{EOL patterns}, \sysname outperforms in two cases and achieves comparable results in three, with \tool{RENGAR} requiring less than 6\% longer attack strings to reach the same matching cost.
For \underline{PML patterns}, \sysname excels in 57 cases and comparably in 105, where \tool{RENGAR} needs about twice the string length to reach the same threshold.
These improvements stem from \sysname’s selection of the shortest feasible repetition unit, which increases the stop point number and matching cost.
Although the gain for EOLs appears modest, it is substantial given the exponential sensitivity to input length.
For \underline{LIL patterns}, 
    \sysname outperforms in 124 cases,
    achieving 162.6$\times$, 1485.4$\times$, and 4805.8$\times$ shorter strings than \tool{RENGAR} 
    at thresholds of $T=0.1$, $T=1$, and $T=10$, respectively.
Among all pattern classes, LIL contributes the most to the overall $\bar{R}_{T}$.
\sysname’s advantage grows as the threshold increases as its effective separated stop point handling 
    and shortest repetition unit selection induce non-linear backtracking growth with input length.
\sysname underperforms in 9 cases: 
one due to a distinctive regex structure leads the engine to prematurely accept the attack string when stop points are ordered separately,
and eight involving \func{LE}s that consume limited input (e.g., with quantifier \{0,20\}), violating Heuristic~\ref{ass:1}’s assumption that \func{LE}s can consume sufficiently long substrings.
However, 
    defining a precise threshold for ``sufficient consumption length'' is inherently difficult, 
    as it depends heavily on the structural complexity of each regex. 
Since such cases constitute only 2.7\% of the total, 
    we leave this issue for future investigation.

\noindent
\textbf{Evaluation across Different Regex Engines.}
We assess \sysname’s generality by comparing it with \tool{RENGAR} across multiple regex engines in different languages, 
as these engines evolve independently with varying defenses and optimizations against ReDoS, and their differences in supported regex features and runtime behaviors may also affect attack performance.
For instance, Java introduced a bounded caching mechanism in Java 9 to reduce backtracking, while JavaScript added a non-backtracking engine for specific patterns in 2020 \cite{v8nonbacktracking,bhuiyan2025sok,jdk}.
Hence, 
    we further include four widely used engines reported to be vulnerable \cite{bhuiyan2025sok}: Python’s \tool{regex} \cite{pythonregex}, JavaScript’s \tool{RegExp} \cite{nodejs} with the experimental ReDoS defense enabled, and Java’s \tool{java.util.regex} \cite{javautilregex} on Java 8 and Java 23 to observe the impact of engine evolution. 
All our experiments use the official releases as of October 2025.

The results are shown in Appendix~\ref{app:cross}.
\sysname consistently outperforms \tool{RENGAR} across all regex engines, with \tool{RENGAR} requiring attack strings 12.1–5051.8× longer to reach the same matching cost threshold on the baseline dataset.
Among pattern categories,
    LIL regexes yield the largest improvement,
    making \tool{RENGAR} requiring 14.9–5510.1× longer strings to achieve a threshold, whereas regexes with PML patterns requiring roughly 20\%–180\% longer strings.
The improvement, however, varies across engines.
\sysname achieves its greatest advantage on Java-8, 
    whose traditional backtracking-based implementation remains highly vulnerable.
In contrast, the advantage diminishes on Java-23 and Node.js-25,
    where the bounded caching mechanism and non-backtracking optimization partially prunes redundant backtracking paths, 
    thereby reducing the matching costs of attack strings generated by both tools toward near-linear behavior and consequently lowering the observed $\bar{R}_{T}$ and $P_{T}$.
For Python’s \tool{regex} engine, \sysname performs comparably to official \tool{re} engine, as both engine share the same backtracking NFA core.

\subsection{CVE Reproduction (RQ2)}
To verify the effectiveness of the attack strings in reproducing ReDoS attacks, 
    we utilize the CVE dataset and re-execute the attacks demonstrated in each CVE report. 
For comparison, 
    we adopt \tool{RENGAR} integrated with \tool{CrossHair},
    extended with a traditional compositional concolic execution module, as the primary baseline.
We also compare the attack strings generated by \sysname with the CVE disclosed attack strings. 
Consistent with prior studies~\cite{bhuiyan2025sok,davis2018impact,mclaughlin2022regulator,su2024towards},
    a ReDoS exploit is considered successful if it causes a system hang of $>10s$ using an input smaller than 100KB.

As illustrated in Table~\ref{tab:vuln_known}.
    \sysname successfully reproduced 30 (96.8\%) ReDoS attacks whereas \tool{RENGAR} reproduced 24 (77.4\%). 
We manually inspected the only CVE that neither tool reproduced and found that it requires complex user interactions and the construction of highly specific input structures. 
As the concolic executor does not fully model these interactions, the exploitation failed. 
Among the six cases where \tool{RENGAR} failed but \sysname succeeded, four were caused by complex path conditions involving regex operations, leading to path explosion in traditional compositional concolic execution. 
\sysname uses ReDoS-specific function summaries to eliminate unnecessary behavior tracking, thereby reducing path exploration overhead.
The remaining two failures resulted from the inability of \tool{RENGAR} to generate sufficiently effective attack strings. 
\sysname instead identified that these two cases have an LIL pattern and generated attack strings that were thousands times more efficient and successfully reproduced the ReDoS attack.
Notably, three CVE-disclosed attack strings failed to reproduce the attack due to excessive length, supporting prior observations that some reported attacks rely on unrealistic input sizes~\cite{bhuiyan2025sok}. Two of these cases exhibit LIL patterns for which \sysname synthesized far more effective attack strings. The remaining case involves an EOL pattern: the disclosed string used an incorrect repetition unit and omitted a mandatory character, so the regex rejected the input without any backtracking.

For runtime overhead, 
    on average, as shown in Table~\ref{tab:vuln_known}, with ReDoS-specific function summaries, \sysname reduces CVE reproduction time by 2.7 minutes per case.
\input{Tex/Table/vuln}
    
\subsection{Real-world Effectiveness (RQ3)}
We further apply \sysname to the real-world project dataset to assess its ability to detect previously undisclosed vulnerabilities.
The results of exploiting undisclosed vulnerabilities are listed in Table~\ref{tab:vuln_unknown}.
In detail,
    \sysname successfully triggered 59 out of 95 vulnerable regexes in 12 projects, while \tool{RENGAR + Crosshair} triggered 34 of them. 
The ablation variant,
    retaining our ReDoS-specific concolic refinement but replacing \sysname's attack string generation with \tool{RENGAR}, triggered 45 ReDoS vulnerabilities.
Among them,
    21 vulnerabilities in six projects have been confirmed by developers.

After manually inspecting the 36 vulnerable regexes that neither \sysname nor \tool{RENGAR} could exploit, 
    we found:

\noindent\hangindent=0.9em  \textbullet\ 30 cases cannot be triggered at all, namely, these are the false positives caused by the ReDoS vulnerability detectors that we integrated. 
Among those 30 cases, 
    21 of them involve internal regexes that are unreachable from any program entry point and are not exposed to user input.
For instance, 
    \tool{numpy} contains five vulnerable regexes that are used only to parse CPU version information. 
Nine other failures result from security filters that prevent the manifestation of ReDoS behavior. For example, \tool{python-markdown} adopts the \tool{re2}~\cite{google-re2} engine in specific modules, which enforces linear-time regex matching and eliminates backtracking-based ReDoS.

\noindent\hangindent=0.9em  \textbullet\ The remaining six failed cases have either complex input structures or indirect dataflows that concolic execution cannot model.
For example, in \tool{pydantic} a vulnerable regex stored as an object attribute is repeatedly retrieved and reused across components, breaking concolic constraint propagation and causing the analysis to fail.

Of the 25 cases where only \tool{RENGAR} failed, we found 11 cases involves complex cross-function regex operations, where summarizing function behavior led to excessively large summaries in traditional compositional concolic analyzers;  these are resolved by \sysname and the ablation variant using ReDoS-specific summaries.
The remaining 14 result from \tool{RENGAR}’s failure to generate sufficiently efficient attack strings within length limits.

Such performance varies across projects. \sysname exhibits greater advantages as the complexity of input processing preceding regex invocation increases.
For example,
    in \tool{pygments},
    a syntax-highlighting library that applies regexes directly to user input, 
    \tool{RENGAR} can easily trace regex usage and thus performs comparably to \sysname.
In contrast,
    \tool{python-markdown} preprocesses input by splitting text blocks and metadata before applying regex parsing, creating deep call chains that hinder \tool{RENGAR} from identifying vulnerable paths, whereas \sysname effectively resolves them with ReDoS-specific function summaries.
Manual analysis further shows that, as presented in Appendix~\ref{apx:project}, an average of 32\% of functions across all projects benefit from  ReDoS-specific summaries, with variation driven by developers' coding style like the usage of global variables.
    
Table~\ref{tab:vuln_unknown} reports runtime overhead. \sysname requires on average 14.6 minutes per project versus 16.1 minutes for the ablation variant and 18.9 minutes for \tool{RENGAR} with \tool{CrossHair}, a reduction of 9.3\% and 22.8\% respectively. 
With ReDoS-specific function summaries and efficient attack string generation,
    \sysname propagates constraints across functions with lower overhead and exploits real-world ReDoS within practical input length limits.
More detailed runtime breakdown can be found in Appendix~\ref{apx:runtime}.


\subsection{Threats to Validity} 
\noindent
\textbf{External Validity.} 
Our design is grounded in formal analysis–based reasoning and is inherently language-agnostic.
While we have evaluated it across multiple regex engines, the experiments primarily focus on the Python ecosystem.
In addition,
    practical regex engines may support extended features such as lookarounds and named groups~\cite{friedl2006mastering}. 
Named groups and non-capturing groups do not affect \sysname as they preserve the order and overlap structure of loop expressions;
lookarounds may affect \sysname in specific cases by altering the valid alphabet of adjacent loop expressions via zero-width assertions.
Nevertheless,
    empirical evaluation results show that \sysname remains effective across diverse real-world regexes, demonstrating strong robustness beyond its formal assumptions.
Extending the formal analysis and evaluation to support advanced regex features, additional programming languages, and runtime environments is left for future work.
    
\noindent
\textbf{Internal Validity.} 
In regex matching, full match semantics require the entire input string to satisfy the regex, while partial match semantics succeed when any substring matches the regex.
The majority of previous works~\cite{wang2023effective,li2021redoshunter,petsios2017slowfuzz,li2022regexscalpel,shen2018rescue} adopt the full match setting, and \sysname follows it for fair comparison and easier integration.
We leave the support for partial match semantics as future work.

For the baseline tool,
    since no open-source compositional concolic execution framework exists for Python, 
    we implemented the baseline using a conventional SMART algorithm~\cite{godefroid2007compositional,kim2019target} following the standard methodology and employed the state-of-the-art concolic executor \tool{Crosshair} to ensure fairness and implementation consistency. 
Besides,
    although taking execution time as the metric may introduce uncertainty, 
    it is the prevailing practice in prior works~\cite{bhuiyan2025sok,davis2018impact,mclaughlin2022regulator,su2024towards}; each measurement was therefore averaged over three independent runs to reduce variance. 
While counting regex matching steps could be more precise, 
    no formally established criterion delineates the threshold of matching steps required to characterize a ReDoS manifestation,
    and most engines lack the required instrumentation, 
    making execution time the most comparable and reproducible measure.

%% file: Tex/Table/baseline.tex
\begin{table}[h!]
\centering
\caption{Comparison between \sysname and \tool{RENGAR}}
\begin{tabular}{
>{\centering\arraybackslash}p{1.3cm}|
>{\centering\arraybackslash}p{1.5cm}|
>{\centering\arraybackslash}p{1.5cm}
}
\toprule[1.5pt]
\textbf{Threshold} & \textbf{$\bar{R}_T$} & \textbf{$P_T$} \\
\midrule
$T = 0.1$ & 97.2 & 92.8\% \\
$T = 1$   & 1128.5 & 92.7\% \\
$T = 10$  & 3872.4 & 93.7\% \\
\bottomrule[1.5pt]
\end{tabular}
\label{tab:baseline_avg}
\end{table}

\vspace{-1em}

\begin{table}[h!]
\centering
\caption{Results of pattern study}
\begin{tabular}{c|cc|cc|cc}
\toprule[1.5pt]
\raisebox{-1.0ex}[0pt][0pt]{\textbf{Pattern}} 
& \multicolumn{2}{c|}{$T = 0.1$} 
& \multicolumn{2}{c|}{$T = 1$} 
& \multicolumn{2}{c}{$T = 10$} \\
\cmidrule(lr){2-3} \cmidrule(lr){4-5} \cmidrule(lr){6-7}
& \textbf{$\bar{R}_T$} & \textbf{$P_T$} & \textbf{$\bar{R}_T$} & \textbf{$P_T$} & \textbf{$\bar{R}_T$} & \textbf{$P_T$} \\
\midrule
EOL & 0.06 & 100\% & 0.01 & 100\% & 0.01 & 100\% \\
LIL & 162.6 & 94.1\% & 1485.4 & 93.7\% & 4805.8 & 93.7\% \\
PML & 1.7 & 100\% & 1.3 & 100\% & 1.5 & 100\% \\
\bottomrule[1.5pt]
\end{tabular}
\label{tab:scrutinize_results}
\end{table}

%% file: Tex/Table/vuln.tex
\begin{table}[h!]
\centering
\caption{Comparison results on ReDoS CVEs.}
\label{tab:vuln_known}
\scriptsize
\setlength{\tabcolsep}{6pt}
\renewcommand{\arraystretch}{0.95}

\begin{tabular}{p{2.0 cm}|>{\centering\arraybackslash}p{1.1cm}>{\centering\arraybackslash}p{1.1cm}>{\centering\arraybackslash}p{1.1cm}}
\toprule
 & \textbf{\sysname} & \textbf{RENGAR} & \textbf{Disclosed} \\ 
\midrule
\textbf{Exploit Rate (\%)} & 96.8 & 77.4 & 90.3 \\  
\textbf{Avg. Time (min)} & 5.3 & 8.0 & -- \\  
\bottomrule
\end{tabular}\\[2pt]
{\footnotesize $^*$Disclosed: attack strings documented in official CVE reports.}
\end{table}

\vspace{-1em}

\begin{table}[h!]
\centering
\caption{Comparison results on undisclosed ReDoS.}
\label{tab:vuln_unknown}
\scriptsize
\setlength{\tabcolsep}{4pt}
\renewcommand{\arraystretch}{0.95}
\begin{tabular}{p{2.0cm}|
  >{\centering\arraybackslash}p{0.4cm}
  >{\centering\arraybackslash}p{1.1cm}|
  >{\centering\arraybackslash}p{0.4cm}
  >{\centering\arraybackslash}p{1.1cm}|
  >{\centering\arraybackslash}p{0.4cm}
  >{\centering\arraybackslash}p{1.1cm}}
\toprule
\textbf{Target Project}
  & \multicolumn{2}{c|}{\textbf{\sysname}}
  & \multicolumn{2}{c|}{\textbf{Ablation\textsuperscript{2}}}
  & \multicolumn{2}{c}{\textbf{RENGAR}} \\
\cmidrule(lr){2-3}\cmidrule(lr){4-5}\cmidrule(lr){6-7}
 & \textbf{ReDoS} & \textbf{Time\textsuperscript{1}} & \textbf{ReDoS} & \textbf{Time} & \textbf{ReDoS} & \textbf{Time} \\
\midrule
nltk                & 3  & 23.0 & 2  & 26.8 & 1  & 28.3 \\
python-markdown     & 7  & 4.0  & 4  & 4.3  & 1  & 5.2  \\
pygments            & 15 & 14.6 & 14 & 19.1 & 14 & 21.4 \\
sphinx              & 5  & 19.1 & 5  & 22.7 & 5  & 25.4 \\
peewee              & 1  & 9.3  & 0  & 12.1 & 0  & 12.8 \\
readme-ai           & 8  & 4.0  & 5  & 4.6  & 2  & 6.1  \\
numpy               & 7  & 41.9 & 6  & 45.3 & 4  & 47.4 \\
h5py                & 3  & 12.3 & 2  & 13.4 & 2  & 16.3 \\
langchain           & 5  & 30.9 & 3  & 34.8 & 2  & 35.9 \\
neural-compressor   & 1  & 5.1  & 1  & 6.2  & 0  & 9.3  \\
pydantic            & 3  & 7.8  & 2  & 11.7 & 2  & 13.0 \\
cffi                & 1  & 3.2  & 1  & 5.6  & 1  & 6.3  \\
\midrule
\textbf{Total} & \textbf{59} & --            & \textbf{45} & --            & \textbf{34} & -- \\
\textbf{Avg.}  & --          & \textbf{14.6} & --          & \textbf{16.1} & --          & \textbf{18.9} \\
\bottomrule
\end{tabular}\\[3pt]
{\footnotesize 1. Time in minutes. 2. Ablation: RENGAR string generation + \sysname concolic refinement.}
\end{table}

%% file: Tex/relatedWork.tex
\noindent
\textbf{ReDoS Attack String Generation.}
Traditional ReDoS attack-string generation is coupled with detection: 
    once a regex is flagged as vulnerable, generators synthesize inputs that induce catastrophic backtracking.
Prior work implements this via static, dynamic, and hybrid techniques.

Early tools such as \tool{regexploit}~\cite{doyensec_regexploit_2025} and \tool{redos-detector}~\cite{kkos_oniguruma_2025} generate attack strings using syntax-level rules for common patterns (e.g., infinite or nested loops), 
    and Demoulin et al.~\cite{demoulin2019detecting} scale this idea with machine learning.
While lightweight, 
    these pattern-based approaches lack semantic reasoning about match execution and thus provide limited exploitability guarantees.
Recent work innovatively demonstrates that compressed data direct computing can be integrated with homomorphic encryption to provide efficient and secure computation~\cite{zhang2021tadoc,hoco}.
Automata-based methods instead construct and analyse the automaton for a regex to detect worst-case behaviours:     
    \tool{Rexploiter}~\cite{wustholz2017static} synthesizes attacks via adversarial automata construction, Weideman et al.~\cite{weideman2016analyzing} speed up analysis with a prioritized NFA, and recent works~\cite{turovnova2022counting,su2024towards} leverage DFA techniques to produce attack strings for non-backtracking engines.
However, 
    their abstractions can still over- or under-approximate runtime behavior, and thus cannot guarantee practical accuracy.

Dynamic exploration approaches solve this problem by leveraging feedback-driven exploration of regex execution to synthesize inputs that incur high matching cost.
\tool{SDLFuzzer}~\cite{kirrage2013static} performs this exploration using randomly generated strings, but its unguided search leads to low efficiency and limited coverage.
\tool{Regulator}~\cite{mclaughlin2022regulator} and \tool{SlowFuzz} improve upon this by adopting grey-box fuzzing, instrumenting the regex engine to guide input mutation through execution feedback.
\tool{ReScue}~\cite{shen2018rescue} employs a similar strategy, using an evolutionary genetic algorithm to evolve inputs toward higher cost.
Nevertheless, these methods remain constrained by limited coverage and high time overhead.

Hybrid analysis tools combine static attack string generation with dynamic validation to balance accuracy and overhead.
\tool{ReDoSHunter}~\cite{li2021redoshunter} integrates pattern-guided static detection with dynamic validation, forming the first hybrid ReDoS analysis framework.
\tool{NFAA}~\cite{weideman2016analyzing} and \tool{Revealer}~\cite{liu2021revealer} extend NFA-based static analyses using dynamic testing to confirm exploitability.
\tool{RENGAR}~\cite{wang2023effective} further improves scalability by mitigating disturbances from subregexes, while \tool{Badger}~\cite{noller2018badger} integrates fuzzing with symbolic execution.
Despite runtime overhead, hybrid analysis consistently delivers the best experimental performance.
Yet, these methods still prioritize generating successful attack strings rather than optimizing attack efficiency or validating real-world exploitability, limiting their practical applicability.

\noindent
\textbf{Compositional Vulnerability Analysis.}
Function summaries have been widely adopted in vulnerability analysis.
Early approaches aimed to summarize all behaviors of each function in the program. 
    SMART~\cite{godefroid2007compositional} disjoined symbolic execution results over inputs and outputs, and Anand et al.~\cite{anand2008demand} extended this with on-demand first-order formula refinement. FOCAL~\cite{kim2019target} applied counterexample-guided refinement in unit testing, while FunFrog~\cite{sery2011interpolation} and HiFrog~\cite{alt2017hifrog} built SSA-form summaries refined iteratively via counterexamples.
While accurate, 
    these methods often incur substantial summary construction and reuse overhead, 
    which can lead to path explosion.
Recent works address this through selective abstraction: generating summaries only for specific components~\cite{sato2013towards,kim2024pbe}, abstracting array operations~\cite{armando2014counterexample} or global variable statements~\cite{kim2019model}, and extending bounded model checking to concurrent code via scheduling summaries~\cite{yin2018scheduling}. Most closely aligned with our approach, NLP-eye~\cite{wang2019nlp} and Goshawk~\cite{lyu2022goshawk} restrict summaries to memory allocation and deallocation semantics to reduce size and accelerate analysis.

%% file: Tex/conclusion.tex
In this paper, 
    we present \sysname, a hybrid framework that, guided by formal analysis, generates efficient attack strings that amplify matching cost.
To guarantee real world exploitability,
    it incorporates a ReDoS-specific compositional concolic execution to refines and validates each attack string at program level.
In evaluation, \sysname generates attack strings 97.2× to 3872.4× shorter while achieving equivalent matching cost with \tool{RENGAR}.
It reproduced 96.8\%  of the ReDoS CVE attacks and exploited 59 previously undisclosed ReDoS vulnerabilities in 12 projects, 
    outperforming attack strings produced by state-of-the-art tools and those disclosed in CVE reports.
We release \sysname at~\cite{pufferdos}.

%% file: Tex/appendices.tex
\subsection{Formal Proofs\label{sec:proof}}
\label{app:proof}

\noindent
\textbf{Proof of Lemma~\ref{lem:1}.}
By Definition~\ref{def:repunit}, each boundary between consecutive copies of a feasible repetition unit $y$ inside the infix is a reachable stop point for every participating \func{LE} that will be considered by the backtracking matcher. For an infix of length at most $n$, the number of such boundaries
\[
m(y) = \Bigl\lfloor \frac{n}{|y|} \Bigr\rfloor.
\]
Therefore, if $|y_1|<|y_2|$ then $m(y_1)\ge m(y_2)$, with strict inequality whenever $\bigl\lfloor n/|y_1| \bigr\rfloor > \bigl\lfloor n/|y_2| \bigr\rfloor$. Let $s_a^{(1)} = x\,y_1^{q_1}\,z$ and $s_a^{(2)} = x\,y_2^{q_2}\,z$ be the corresponding attack strings with $|s_a^{(i)}| = n$. Under Heuristic~\ref{ass:1} (matching cost is non-decreasing in the number of reachable stop points), it follows:
\[
C_r\bigl(s_a^{(1)},y_1,n,m(y_1)\bigr)\;\ge\; C_r\bigl(s_a^{(2)},y_2,n,m(y_2)\bigr),
\]
with strict inequality when $m(y_1)>m(y_2)$. 

\noindent
\textbf{Proof of Lemma~\ref{lem:2}.}
Let $s_a^{(y)}=x\,y^{q}\,z$ with $|s_a^{(y)}|= n$. For a given $y$, let $m_i(y)$ and $m_{i+1}(y)$ denote the numbers of reachable stop points contributed by $\func{LE}_i$ and $\func{LE}_{i+1}$ on $s_a^{(y)}$ to the left of the failing location. Because every symbol of $y_1$ lies in $\Sigma(\func{LE}_i)\cap\Sigma(\func{LE}_{i+1})$, repeating $y_1$ creates stop points for both loops at the boundaries between consecutive copies, so $m_i(y_1)>0$ and $m_{i+1}(y_1)>0$ grow with $q$. In contrast, by construction $y_2$ is consumable by exactly one loop, hence on the repeated segment only one of $m_i(y_2),m_{i+1}(y_2)$ increases while the other remains zero.

Therefore the total number of reachable stop points with $y_1$ satisfies
\[
 m_i(y_1)+m_{i+1}(y_1)\;\ge\; m_i(y_2)+m_{i+1}(y_2),
\]
with a strict inequality whenever the second loop contributes at least one stop point on the $y_1$ segment. By Heuristic~\ref{ass:1} (matching cost is non-decreasing in the number of reachable stop points and strictly increases when that number increases), we obtain
\[
\begin{split}
 C_r\bigl(s_a^{(y_1)},y_1,n,m_i(y_1)+m_{i+1}(y_1)\bigr)\\
 \ge\; C_r\bigl(s_a^{(y_2)},y_2,n,m_i(y_2)+m_{i+1}(y_2)\bigr),
\end{split}
\]
with strict inequality when the total number of reachable stop points increases.

\noindent
\textbf{Proof of Lemma~\ref{lem:3}.}
Consider two attack strings $s_a^{(1)}$ and $s_a^{(2)}$ of equal length $n$  that differ only in the arrangement of stop points of $\func{LE}_i$ and $\func{LE}_{i+1}$. 
Let $\func{LE}_i$ and $\func{LE}_{i+1}$ have $j$ and $k$ stop points on both  attack strings, respectively, denoted by
\[
\mathcal{P}_i = \{p_{i,0}, \dots, p_{i,j-1}\}, \quad
\mathcal{P}_{i+1} = \{p_{i+1,0}, \dots, p_{i+1,k-1}\}.
\]
In $s_a^{(1)}$, the stop points are \emph{interleaved}:
\[
s_a^{(1)}: p_{i,0} \!<\! p_{i+1,0} \!<\! p_{i,1} \!<\! p_{i+1,1} \!<\! \cdots \!<\! p_{i,j-1} \!<\! p_{i+1,k-1}
\]
whereas in $s_a^{(2)}$, they are arranged in \emph{separated} order:
\[
s_a^{(2)}: p_{i,0} < \cdots < p_{i,j-1} < p_{i+1,0} < \cdots < p_{i+1,k-1}
\]
Each time $\func{LE}_i$ backtracks at $p_{i,l}\!\in\!\mathcal{P}_i$, 
$\func{LE}_{i+1}$ backtracks at each its stop points from $p_{i,l}$ to the end of attack string. 
Let 
$M(s_a,\!\func{LE}_{i+1},\!p_{i,l}\!\!\rightarrow\!n)$ 
denote the number of stop points of $\func{LE}_{i+1}$ reachable from $p_{i,l}$ to the end of attack string, 
when stop points are interleaved as in $s_a^{(1)}$, 
some of $\func{LE}_{i+1}$’s stop points lie before  $p_{i,l}$ and are unreachable during backtracking, 
so $M(s_a^{(1)},\func{LE}_{i+1},p_{i,l}\!\rightarrow\!n) \leq k$. 
In the separated case $s_a^{(2)}$, all $k$ stop points remain reachable, i.e., $M(s_a^{(2)},\func{LE}_{i+1},p_{i,l}\!\rightarrow\!n)=k$. 
Hence the total number of reachable stop points satisfies
\[
\begin{split}
m^{(2)}=\sum_{l=0}^{j-1} M(s_a^{(2)},\func{LE}_{i+1},p_{i,l}\!\rightarrow\!n)
\\
\geq m^{(1)}=\sum_{l=0}^{j-1} M(s_a^{(1)},\func{LE}_{i+1},p_{i,l}\!\rightarrow\!n)
\end{split}
\]
As $\func{LE}_i$ backtracks over all its $k$ own stop points in both cases, its cost is constant; 
thus the total matching cost depends only on $\func{LE}_{i+1}$. 
By Heuristic~\ref{ass:1} (matching cost increases with the number of reachable stop points), we obtain
\[
C_r(s_a^{(2)},y,n,m^{(2)}) \geq C_r(s_a^{(1)},y,n,m^{(1)})
\]
so that the separated arrangement incurs no lower matching cost than the interleaved one under the same length budget.

\subsection{Regex Vulnerable Pattern Coverage}
\label{apx:pattern}
\input{Tex/Table/pattern}
Table~\ref{tab:regex_patterns} summarizes how our designed patterns cover the representative vulnerable structures reported in prior work~\cite{wang2023effective,li2021redoshunter}.
Prior studies~\cite{wang2023effective} have shown that these categories collectively subsume all vulnerable pattern types identified in existing literature.
\input{Tex/Table/realworld}

\subsection{Project Statistics}
\label{apx:project}
Table~\ref{tab:project} summarizes statistics for the 12 real-world Python projects used in our evaluation, reporting each project's GitHub stars, LOC, number of defined functions (Func), early-returning functions (\#$f$), ReDoS-specific summarizable functions ($\phi_f$), and vulnerable regexes (Vul.~$r$). In total, the dataset comprises 95 vulnerable regexes, 68,021 defined functions, and 5,848 early-returning functions, of which 1,895 (32\%) are ReDoS-specific.

\subsection{Cross-engine Evaluation Results}
\label{app:cross}
\input{Tex/Table/diff_engine}
Table~\ref{tab:cross_engine_eval} compares \sysname and \tool{RENGAR} on the baseline dataset, where the first column lists the regex engines and the subsequent columns show results under different matching cost thresholds.
Tables~\ref{tab:pattern_cross_engine}(a)–(c) present results on 300 selected regexes grouped by pattern type: (a) EOL, (b) LIL, and (c) PML.

\vspace{-1em}
\subsection{Runtime Breakdown.}
\label{apx:runtime}
Table~\ref{tab:time_breakdown} shows the runtime breakdown. \sysname averages 1.2 min for generation and 13.4 min for concolic execution, saving 5 min over \tool{RENGAR} despite slightly higher generation overhead.
\begin{table}[h!]
\centering
\caption{Runtime breakdown per project.}
\label{tab:time_breakdown}
\scriptsize
\setlength{\tabcolsep}{4pt}
\renewcommand{\arraystretch}{0.95}
\begin{tabular}{p{2.0cm}|
  >{\centering\arraybackslash}p{0.7cm}
  >{\centering\arraybackslash}p{0.7cm}
  >{\centering\arraybackslash}p{0.7cm}|
  >{\centering\arraybackslash}p{0.7cm}
  >{\centering\arraybackslash}p{0.7cm}
  >{\centering\arraybackslash}p{0.7cm}}
\toprule
\textbf{Target Project}
  & \multicolumn{3}{c|}{\textbf{\sysname}}
  & \multicolumn{3}{c}{\textbf{RENGAR}} \\
\cmidrule(lr){2-4}\cmidrule(lr){5-7}
  & \textbf{Total}
  & \textbf{Gen.}
  & \textbf{Con.}
  & \textbf{Total}
  & \textbf{Gen.}
  & \textbf{Con.} \\
\midrule
nltk                & 23.0 & 1.3 & 21.7 & 28.3 & 0.6 & 27.7 \\
python-markdown     & 4.0  & 0.9 & 3.1  & 5.2  & 0.3 & 4.9  \\
pygments            & 14.6 & 2.7 & 11.9 & 21.4 & 1.1 & 20.3 \\
sphinx              & 19.1 & 0.6 & 18.5 & 25.4 & 0.3 & 25.1 \\
peewee              & 9.3  & 0.4 & 8.9  & 12.8 & 0.2 & 12.6 \\
readme-ai           & 4.0  & 1.8 & 2.2  & 6.1  & 0.7 & 5.4  \\
numpy               & 41.9 & 2.3 & 39.6 & 47.4 & 0.9 & 46.5 \\
h5py                & 12.3 & 0.5 & 11.8 & 16.3 & 0.2 & 16.1 \\
langchain           & 30.9 & 1.6 & 29.3 & 35.9 & 0.7 & 35.2 \\
neural-compressor   & 5.1  & 0.2 & 4.9  & 9.3  & 0.1 & 9.2  \\
pydantic            & 7.8  & 1.1 & 6.7  & 13.0 & 0.4 & 12.6 \\
cffi                & 3.2  & 0.4 & 2.8  & 6.3  & 0.2 & 6.1  \\
\midrule
\textbf{Avg.} & \textbf{14.6} & \textbf{1.2} & \textbf{13.4} & \textbf{18.9} & \textbf{0.5} & \textbf{18.4} \\
\bottomrule
\end{tabular}\\[3pt]
{\footnotesize 1. Time in minutes. 2. Con.: concolic execution.}
\end{table}

\subsection{Concolic Execution Example}
\label{appendix:concolic_example}

We illustrate the concolic refinement of \sysname{} using the motivating example from Section~\ref{sec:background}.

\smallskip\noindent\textbf{Phase 1: Path Extraction.}
Given the entry point \texttt{is\_xml(text)}, \sysname constructs a static call graph and traverses backward from \texttt{tag\_re.search} (Line~10) to the entry.
Along this path, \texttt{check\_doctype}, \texttt{check\_xml} and \texttt{hash} are classified as $f_{\mathit{str}}$.

\smallskip\noindent\textbf{Phase 2: Function Summarization.}
\sysname{} performs concolic unit testing on \texttt{check\_doctype}, \texttt{check\_xml}, and \texttt{hash}.
Taking \texttt{check\_doctype} as an example, the four hypotheses are evaluated against $s_a$ in order:
\begin{enumerate}[leftmargin=*]
  \item $s_a \subseteq \mathit{pstr} \wedge \mathit{ret} = \text{const}$:
        \emph{refuted}; the function does not always return a constant.
  \item $s_a \subseteq \mathit{pstr} \wedge \mathit{ret} = s_a$:
        \emph{refuted}; the return value may differ when a DOCTYPE token is present.
  \item $ s_a = \mathit{pstr} \wedge \mathit{ret} = s_a$:
        \emph{validated}; when $s_a$ contains no DOCTYPE prefix,
        the function returns its input unchanged.
\end{enumerate}
Hypothesis~(iii) is adopted as the ReDoS-specific function summary:
  $\varphi_f:\quad \mathit{pstr} = s_a \;\Rightarrow\; \mathit{ret} = s_a$

\smallskip\noindent\textbf{Phase 3: System-Level Concolic Execution.}
\sysname{} composes the function summaries derived in Phase~2 and performs system-level concolic execution from \texttt{is\_xml}.
Each summarized function call is replaced by its corresponding $\varphi_f$: $\varphi_{\texttt{check\_xml}}$ at Line~3, $\varphi_{\texttt{hash}}$ at Line~5, and $\varphi_{\texttt{check\_doctype}}$ at Line~9.
\sysname{} then instruments the program with the negated assertion that $s_a$ does not reach \texttt{tag\_re.search} and invokes the SMT solver to find a counterexample.
The solver reasons over the following residual constraints without symbolically executing any summarized function:
\begin{itemize}[leftmargin=*]
  \item $\texttt{xml\_decl\_re.match}(s_a) = None $, to bypass the guard at Line~3.
  \item $s_a \notin \texttt{\_looks\_like\_xml\_cache}$, to bypass the cache at Line~7.
  \item $\mathit{pstr} = s_a \Rightarrow \mathit{ret} = s_a$, to propagate $s_a$ through \texttt{check\_doctype} at Line~9.
\end{itemize}
The solver produces a counterexample $s_a'$ satisfying all three constraints, confirming that $s_a'$ reaches \texttt{tag\_re.search} and is reported as a refined, exploitable attack string.

\newpage
\section{Meta-Review}

The following meta-review was prepared by the program committee for the 2026
IEEE Symposium on Security and Privacy (S\&P) as part of the review process as
detailed in the call for papers.

\subsection{Summary}
The proposed tool PUFFERDOS generates length-bounded ReDoS attack strings for backtracking regex engines. Experiments show notable efficiency gains over RENGAR and report reproducing most known CVEs plus discovering new exploitable instances.

\subsection{Scientific Contributions}
\begin{itemize}
\item Creates a New Tool to Enable Future Science.
\item Provides a Valuable Step Forward in an Established Field.
\end{itemize}

\subsection{Reasons for Acceptance}
\begin{enumerate}
\item Much shorter attack strings and large speedups compared to the baseline, improving exploit deliverability under input limits.
\item Useful integration of ReDoS-specific compositional concolic execution to validate exploitability in context (reachability/input constraints), reducing false positives.
\item Evidence of real-world impact via reproduction of known CVEs and disclosure of newly found cases.
\end{enumerate}

\subsection{Noteworthy Concerns}
\begin{enumerate}
\item The ``attack strings are too long'' framing is not fully substantiated with direct evidence about attacker feasibility/developer behavior, lacking clearer, concrete threat-model justification and examples.
\item Analysis relies on a simplified regex model and a fixed pattern taxonomy, potentially missing ReDoS cases requiring advanced regex features.
\item Comparison is primarily against one baseline only (RENGAR); broader SOTA comparisons are missing.
\end{enumerate}

%% file: Tex/Table/pattern.tex
\begin{table}[h!]
\centering
\caption{Vulnerable patterns, examples, and coverage.}
\begin{tabular}{c|c|c}
\toprule[1.5pt]
\textbf{Pattern} & \textbf{Example} & \textbf{Coverage} \\ \midrule
EOL & \texttt{\string^(ab|a|b)+\$} & EOLS~\cite{wang2023effective}, EOA, NQ, EOD~\cite{li2021redoshunter} \\
LIL & \verb|.*a.*|               & PTLS,POLS~\cite{wang2023effective}, 
POA,SLQ~\cite{li2021redoshunter} \\
PML & \verb|\s*a\s*b\s+|            & PTLS~\cite{wang2023effective}, POA~\cite{li2021redoshunter} \\
\bottomrule[1.5pt]
\end{tabular}
\label{tab:regex_patterns}
\end{table}

%% file: Tex/Table/realworld.tex
\begin{table}[]
\caption{Project statistics}
\label{tab:project}
\centering
\begin{tabular}{p{2.2cm}|
  >{\centering\arraybackslash}p{0.4cm}
  >{\centering\arraybackslash}p{0.78cm}|
  >{\centering\arraybackslash}p{0.5cm}
  >{\centering\arraybackslash}p{0.5cm}
  >{\centering\arraybackslash}p{0.8cm}
  >{\centering\arraybackslash}p{0.75cm}}
\midrule
\textbf{Project} 
  & \textbf{Stars} 
  & \textbf{LOC} 
  & \textbf{Func} 
  & \textbf{\#$f$} 
  & \textbf{$\phi_f$}
  & \textbf{Vul. $r$} \\
\midrule
nltk              & 14.4k & 119{,}361 & 6147   & 678 & 189  & 8 \\
python-markdown   & 2.8k  & 7,860  & 391    & 170  & 80   & 9 \\
pygments          & 2.1k  & 122{,}102 & 1353   & 611  & 188  & 15 \\
sphinx            & 7.5k  & 121{,}070 & 7142   & 1140 & 274  & 10 \\
peewee            & 11.8k & 40{,}444  & 3508   & 680  & 161  & 1 \\
readme-ai         & 2.7k  & 10{,}332  & 585    & 282  & 32  & 8 \\
numpy             & 30.7k & 235{,}972 & 13{,}103 & 1089 & 446 & 21 \\
h5py              & 2.2k  & 15{,}923  & 1328   & 485  & 175  & 3 \\
langchain         & 119k  & 250{,}993 & 9632   & 474 & 209 & 14 \\
neural-compressor & 2.5k  & 443{,}427 & 13{,}530 & 316 & 43 & 1 \\
pydantic          & 25.6k & 91{,}150  & 5927   & 291 & 54  & 4 \\
cffi              & 221  & 31{,}112  & 2373   & 632  & 44  & 1\\ 
\midrule
\textbf{Total} & -- & -- & \textbf{68,021} & \textbf{5,848} & \textbf{1,895} & \textbf{95} \\
\bottomrule
\end{tabular}
\end{table}

%% file: Tex/Table/diff_engine.tex
\begin{table}[t!]
\centering
\caption{Cross-Engine evaluation results}
\label{tab:cross_engine_eval}
\renewcommand{\arraystretch}{0.9}
\begin{tabular}{
>{\centering\arraybackslash}m{1.4cm}|
>{\centering\arraybackslash}m{0.7cm}>{\centering\arraybackslash}m{0.6cm}|
>{\centering\arraybackslash}m{0.7cm}>{\centering\arraybackslash}m{0.6cm}|
>{\centering\arraybackslash}m{0.7cm}>{\centering\arraybackslash}m{0.6cm}
}
\toprule[1.5pt]
\textbf{Engine} 
& \multicolumn{2}{c|}{$T = 0.1$} 
& \multicolumn{2}{c|}{$T = 1$} 
& \multicolumn{2}{c}{$T = 10$} \\
\cmidrule(lr){2-3} \cmidrule(lr){4-5} \cmidrule(lr){6-7}
& \textbf{$\bar{R}_T$} & \textbf{$P_T$}
& \textbf{$\bar{R}_T$} & \textbf{$P_T$}
& \textbf{$\bar{R}_T$} & \textbf{$P_T$} \\
\midrule
Python-3.12 & 98.8 & 88.0\% & 928.5 & 89.9\% & 2751.5 & 91.9\% \\
Node.js-25  & 12.1 & 73.1\% & 69.3 & 82.9\% & 99.2 & 83.9\% \\
Java-8      & 185.4 & 92.5\% & 2042.7 & 92.1\% & 5051.8 & 93.2\% \\
Java-23     & 28.1 & 71.4\% & 162.9 & 73.1\% & 318.7 & 73.8\% \\
\bottomrule[1.5pt]
\end{tabular}
\end{table}

\vspace{1.2em}

\begin{table}[!t]
\centering
\caption{Pattern-Based cross-engine evaluation results}
\label{tab:pattern_cross_engine}
\renewcommand{\arraystretch}{1.1}

\begin{tabular}{
>{\centering\arraybackslash}m{1.4cm}|
>{\centering\arraybackslash}m{0.7cm}>{\centering\arraybackslash}m{0.6cm}|
>{\centering\arraybackslash}m{0.7cm}>{\centering\arraybackslash}m{0.6cm}|
>{\centering\arraybackslash}m{0.7cm}>{\centering\arraybackslash}m{0.6cm}
}
\toprule[1.5pt]
\multicolumn{7}{c}{\textbf{(a) EOL}} \\ 
\midrule
\textbf{Engine} 
& \multicolumn{2}{c|}{$T = 0.1$} 
& \multicolumn{2}{c|}{$T = 1$} 
& \multicolumn{2}{c}{$T = 10$} \\
\cmidrule(lr){2-3} \cmidrule(lr){4-5} \cmidrule(lr){6-7}
& \textbf{$\bar{R}_T$} & \textbf{$P_T$}
& \textbf{$\bar{R}_T$} & \textbf{$P_T$}
& \textbf{$\bar{R}_T$} & \textbf{$P_T$} \\
\midrule
Python-3.12 & 0.01 & 100\% & 0.01 & 100\% & 0.01 & 100\% \\
Node.js-25  & 0.00 & 0\% & 0.00 & 0\% & 0.00 & 0\% \\
Java-8      & 0.05 & 100\% & 0.10 & 100\% & 0.10 & 100\% \\
Java-23     & 0.00 & 0\% & 0.01 & 100\% & 0.01 & 100\% \\
\bottomrule[0.5pt]
\end{tabular}

\vspace{0.2em}

\begin{tabular}{
>{\centering\arraybackslash}m{1.4cm}|
>{\centering\arraybackslash}m{0.7cm}>{\centering\arraybackslash}m{0.6cm}|
>{\centering\arraybackslash}m{0.7cm}>{\centering\arraybackslash}m{0.6cm}|
>{\centering\arraybackslash}m{0.7cm}>{\centering\arraybackslash}m{0.6cm}
}
\toprule[1.5pt]
\multicolumn{7}{c}{\textbf{(b) LIL}} \\ 
\midrule
\textbf{Engine} 
& \multicolumn{2}{c|}{$T = 0.1$} 
& \multicolumn{2}{c|}{$T = 1$} 
& \multicolumn{2}{c}{$T = 10$} \\
\cmidrule(lr){2-3} \cmidrule(lr){4-5} \cmidrule(lr){6-7}
& \textbf{$\bar{R}_T$} & \textbf{$P_T$}
& \textbf{$\bar{R}_T$} & \textbf{$P_T$}
& \textbf{$\bar{R}_T$} & \textbf{$P_T$} \\
\midrule
Python-3.12 & 101.3 & 91.2\% & 1021.2 & 91.1\% & 3081.8 & 91.7\% \\
Node.js-25  & 14.9 & 74.4\% & 112.9 & 83.8\% & 175.5 & 88.1\% \\
Java-8      & 245.2 & 93.6\% & 2359.8 & 93.0\% & 5510.1 & 93.2\% \\
Java-23     & 38.7 & 74.5\% & 191.6 & 74.0\% & 472.3 & 80.8\% \\
\bottomrule[0.5pt]
\end{tabular}

\vspace{0.2em}

\begin{tabular}{
>{\centering\arraybackslash}m{1.4cm}|
>{\centering\arraybackslash}m{0.7cm}>{\centering\arraybackslash}m{0.6cm}|
>{\centering\arraybackslash}m{0.7cm}>{\centering\arraybackslash}m{0.6cm}|
>{\centering\arraybackslash}m{0.7cm}>{\centering\arraybackslash}m{0.6cm}
}
\toprule[1.5pt]
\multicolumn{7}{c}{\textbf{(c) PML}} \\ 
\midrule
\textbf{Engine} 
& \multicolumn{2}{c|}{$T = 0.1$} 
& \multicolumn{2}{c|}{$T = 1$} 
& \multicolumn{2}{c}{$T = 10$} \\
\cmidrule(lr){2-3} \cmidrule(lr){4-5} \cmidrule(lr){6-7}
& \textbf{$\bar{R}_T$} & \textbf{$P_T$}
& \textbf{$\bar{R}_T$} & \textbf{$P_T$}
& \textbf{$\bar{R}_T$} & \textbf{$P_T$} \\
\midrule
Python-3.12 & 0.9 & 100\% & 1.0 & 100\% & 1.1 & 100\% \\
Node.js-25  & 0.2 & 100\% & 0.2 & 100\% & 0.2 & 100\% \\
Java-8      & 1.8 & 100\% & 1.4 & 100\% & 1.6 & 100\% \\
Java-23     & 0.2 & 100\% & 0.3 & 100\% & 0.5 & 100\% \\
\bottomrule[0.5pt]
\end{tabular}
\end{table}